\newtheorem{theo}{Theorem}[]
\newtheorem{coro}[theo]{Corollary}
\newtheorem{lemma}[theo]{Lemma}
\newtheorem{rem}[theo]{Remark}
\newtheorem{defi}[theo]{Definition}
\newtheorem{prop}[theo]{Proposition}
\title{{Roadblocked monotonic paths and the enumeration of coalescent histories for non-matching caterpillar gene trees and species trees}}
\author{Zoe M.~Himwich\thanks{Department of Mathematics, Stanford University, Stanford, CA 94305 USA} \\ Noah A.~Rosenberg\thanks{Department of Biology, Stanford University, Stanford, CA 94305 USA. Email: noahr@stanford.edu.}}
\date{
{\normalsize  \today} \\ \medskip
{\bf Key words:} Catalan numbers, coalescent histories, Dyck paths, monotonic paths, nearest-neighbor-interchange, subtree-prune-and-regraft \\ \medskip
{\bf Running title:} Coalescent histories for non-matching caterpillars \\ \medskip
{\bf Mathematics subject classification:} 05A15, 05A19, 05B35, 92B10, 92D15
}
\begin{document}
\maketitle
\normalsize

\noindent {\bf Abstract.} Given a gene tree topology and a species tree topology, a coalescent history represents a possible mapping of the list of gene tree coalescences to associated branches of a species tree on which those coalescences take place. Enumerative properties of coalescent histories have been of interest in the analysis of relationships between gene trees and species trees. The simplest  enumerative result identifies a bijection between coalescent histories for a matching caterpillar gene tree and species tree with monotonic paths that do not cross the diagonal of a square lattice, establishing that the associated number of coalescent histories for $n$-taxon matching caterpillar trees ($n \geqslant 2$) is the Catalan number $C_{n-1} = \frac{1}{n} {2n-2 \choose n-1}$. Here, we show that a similar bijection applies for \emph{non-matching} caterpillars, connecting coalescent histories for a non-matching caterpillar gene tree and species tree to a class of \emph{roadblocked} monotonic paths. The result provides a simplified algorithm for enumerating coalescent histories in the non-matching caterpillar case. It enables a rapid proof of a known result that given a caterpillar species tree, no non-matching caterpillar gene tree has a number of coalescent histories exceeding that of the matching gene tree. Additional results on coalescent histories can be obtained by a bijection between permissible roadblocked monotonic paths and Dyck paths. We study the number of coalescent histories for non-matching caterpillar gene trees that differ from the species tree by nearest-neighbor-interchange and subtree-prune-and-regraft moves, characterizing the non-matching caterpillar with the largest number of coalescent histories. We discuss the implications of the results for the study of the combinatorics of gene trees and species trees.

\section{Introduction}
\label{secIntroduction}

\noindent In the mathematical study of evolutionary trees, genetic lineages can be treated as evolving along the branches of a species phylogeny, a tree that represents the evolutionary relationships among a set of species \citep{PamiloAndNei88,Maddison97,DegnanAndRosenberg09}. A tree describing a set of genetic lineages that descend from a common ancestor is a \emph{gene tree}, and a tree relating the species themselves is a \emph{species tree}. Looking backward in time, in a gene tree of genetic lineages sampled from representative individuals of a given set of species, a pair of genetic lineages can \emph{coalesce}, or find a common ancestor, only after the common ancestor of their species is reached. More generally, a set of two or more genetic lineages has a most recent common ancestor only after the most recent common ancestor of their associated species is reached.

The study of the relationship between gene trees and species trees---usually treated as binary, rooted, and leaf-labeled---has generated a number of novel combinatorial structures \citep{Maddison97, DegnanAndSalter05, RosenbergAndTao08, ThanAndNakhleh09, DegnanEtAl12:mathbiosci, StadlerAndDegnan12, Wu12, Wu16, DegnanAndRhodes15}. Among these are \emph{coalescent histories}, structures that describe the possible locations on a species tree where the coalescences of a gene tree can take place \citep{DegnanAndSalter05, Rosenberg07:jcb}. More precisely, for a (binary, rooted, leaf-labeled) gene tree topology $G$ and a (binary, rooted, leaf-labeled) species tree topology $S$ on the same set of taxa, a coalescent history $f$ associates with each coalescence in $G$ an edge of $S$, such that two properties are satisfied: (i) the species tree edge $h(u)$ associated with a gene tree coalescence $u$ is ancestral to all lineages that descend from $u$; (ii) for any pair of gene tree coalescences $u,v$ for which $u$ lies on a path from $v$ to a leaf of the gene tree, $h(u)$ lies on a path from $h(v)$ to a leaf of the species tree. From a biological perspective, this pair of constraints encodes the rules that (i) gene lineages can coalesce only in a branch of the species tree in which it is possible for their ancestors to coexist, and that (ii) ancestors can coalesce no more recently than their descendants.

\cite{Rosenberg07:jcb} provided a recursion that enumerates coalescent histories for arbitrary gene tree and species tree topologies. For gene tree topology $G$ and species tree topology $S$, so that the taxon set of $S$ is a superset of that of $G$ but not necessarily the same set, let $T(G,S)$ denote the minimal displayed subtree of $S$ that contains all the taxa of $G$, that is, the subtree of $S$ rooted at the node that corresponds to the most recent common ancestor of the taxa with the same labels as the taxa in $G$. Let $d(G,S) \geqslant 0$ denote the number of edges that separate the root of $T(G,S)$ from the root of $S$. Let $G_L$ and $G_R$ denote the left and right subtrees of $G$. We define an integer parameter $m \geqslant 1$, and write a recursion for a function $B_{G,S,m}$:
\begin{equation}
\label{eqRecursion}
B_{G,S,m} = \sum_{k=1}^m B_{G_L, T(G_L,S),k+d(G_L,S)} B_{G_R, T(G_R,S),k+d(G_R,S)},
\end{equation}
The base case is obtained by setting $B_{G,S,m}$ to 1 for all $m$ in the case that $G$ has only one taxon. With these definitions, the number of coalescent histories for gene tree topology $G$ and species tree topology $S$ is $B_{G,S,1}$.

Caterpillar species trees, in which an internal node exists that is descended from all other internal nodes, represent a special case in which enumeration of the coalescent histories is simpler than in the general case of arbitrary species trees. Thus, although exact and asymptotic results are known for certain additional shapes \citep{Rosenberg07:jcb, Rosenberg19,  DisantoAndRosenberg15}, enumerative properties have been explored most extensively for caterpillar species trees and shapes that closely resemble them \citep{Degnan05, DegnanAndSalter05, Rosenberg07:jcb, Rosenberg13:tcbb, RosenbergAndDegnan10,  DisantoAndRosenberg16}. First, for a matching caterpillar gene tree and species tree---a caterpillar gene tree and species tree with the same labeled topology---\cite{Degnan05} found a bijection between coalescent histories and monotonic paths on a square lattice that do not cross above the $y=x$ diagonal, a quantity well-known to be described by the Catalan number sequence \citep[][item 24]{Stanley15}. Eq.~\ref{eqRecursion} recovers the Catalan numbers in this case \citep[][Corollary 3.5]{Rosenberg07:jcb}, and can be used to show that the number of coalescent histories for matching gene trees and species trees in small ``caterpillar-like families'' is asymptotic to a constant multiple of the Catalan numbers \citep{Rosenberg07:jcb, Rosenberg13:tcbb}. This asymptotic behavior has been demonstrated for caterpillar-like families of arbitrary size using techniques of analytic combinatorics \citep{DisantoAndRosenberg16}.

Enumerative results have been comparatively little studied, however, in the case that labeled gene trees and species trees disagree in topology. \cite{ThanEtAl07} performed a numerical investigation, finding that the number of coalescent histories for non-matching gene tree and species tree topologies generally decreases with increasing subtree-prune-and-regraft (SPR) distance between the trees. \cite{RosenbergAndDegnan10} demonstrated that for the caterpillar species tree topology with $n \geqslant 7$ taxa, there exists a non-matching gene tree topology with more coalescent histories than the matching caterpillar gene tree topology. Nevertheless, for caterpillar species tree topologies, \cite{DegnanAndRhodes15} showed that no non-matching caterpillar gene tree topology can exceed the matching caterpillar gene tree topology in number of coalescent histories; indeed, the constructive example of \cite{RosenbergAndDegnan10} of a non-matching gene tree topology with more coalescent histories than the matching caterpillar was not itself a caterpillar.

Here, we extend the monotonic path approach of \cite{Degnan05} to non-matching caterpillar gene tree and species tree topologies. We show that coalescent histories for non-matching caterpillar gene tree and species tree topologies can be bijectively associated with a set of \emph{roadblocked} monotonic paths that do not cross above the $y=x$ diagonal of a square lattice. The approach immediately recovers the result of \cite{DegnanAndRhodes15} that non-matching caterpillar gene tree topologies do not exceed the matching caterpillar gene tree topology in number of coalescent histories. It enables calculations of the number of coalescent histories for caterpillar gene tree topologies that differ from the species tree by common transformations---nearest-neighbor-interchange and subtree-prune-and-regraft. We characterize non-matching caterpillar gene trees with the largest numbers of coalescent histories, finding that the number of coalescent histories in such cases is asymptotically equivalent to that in the matching case.

\section{Preliminaries}
\label{secPreliminaries}

\subsection{Caterpillar trees}
\label{secCaterpillarTrees}

\noindent We consider binary, rooted, leaf-labeled trees with leaf labels bijectively drawn from a label set $X$ containing $n$ distinct labels. For convenience, a ``tree'' refers to a binary, rooted, leaf-labeled tree. Trees contain two types of nodes, leaf nodes and non-leaf, or internal, nodes. Because trees are rooted, we say that a node $v_1$ of a tree $G$ is \emph{descended} from another node $v_2$ if the shortest path from $v_1$ to the root node contains $v_2$. We also say that $v_2$ is \emph{ancestral} to $v_1$. Ancestor--descendant relationships also apply to pairs of edges and to pairs containing a vertex and an edge. A node or edge is trivially descended from itself, and it is also trivially ancestral to itself. The root node is an internal node.

We focus on \emph{caterpillar trees}, trees in which there exists an internal node descended from all other internal nodes (Figure~\ref{figTransformations}A). A caterpillar tree has exactly one \emph{cherry} node, a node with exactly two descendant leaves. Among leaves, the longest path length to the root of a caterpillar tree with $n$ leaves is $n-1$.

The number of distinct caterpillar trees possible for a label set $X$ with $n$ distinct labels is $n!/2$: the leaf separated from the root by only edge has $n$ possible labels, the leaf two edges from the root then has $n-1$ possible labels, and so on. In this assignment of labels, the leaves descended from the cherry node are exchangeable. Hence, only one labeling is possible for these leaves, giving a total of $n(n-1)(n-2)\times \cdots \times 3 = n!/2$ labelings. These labelings represent the $n!/2$ caterpillar \emph{labeled topologies} for label set $X$.

For convenience, we organize the labels in an $n$-leaf caterpillar tree $G$ canonically in a vector $\mathbf{g}$ of length $n$. For $i=3,4,\ldots,n$, entry $i$ in the vector is the label of the leaf separated from the root by $n-i+1$ edges. Entries 1 and 2 are the labels for the leaves in the cherry. Two vectors of labels $\mathbf{g}$ and $\mathbf{s}$ are considered to be equivalent if and only if one of the following two conditions holds: (1) $g_i=s_i$ for all $i$, or (2) $g_1=s_2$, $g_2=s_1$, and $g_i=s_i$ for each $i=3,4,\ldots,n$.

Two leaves in a caterpillar tree are considered to be \emph{adjacent} if they are separated by exactly two or three edges (Figure~\ref{figTransformations}A). Equivalently, leaves are adjacent if and only if their indices in the sequence of labels for the tree differ by 1, or if one is entry 1 and the other is entry 3.

A \emph{component} of a caterpillar tree is a subset of adjacent leaves, excluding from the definition the subset consisting solely of the pair of leaves in the cherry. Formally, a subset of labels $X^\prime \subset X$ is a component of $G$ if $X^\prime \neq \{x_1, x_2\}$ and for any pair of labels $x_1, x_2 \in X^\prime$, there exists a sequence of distinct elements $x_1, x_{i_1},x_{i_2}, \ldots x_{i_j}, x_2 \in X^\prime$ in which each consecutive pair of elements labels adjacent leaves in $G$.

It is convenient to number the internal nodes of an $n$-leaf caterpillar tree from 1 to $n-1$ in increasing order from the cherry node toward the root. These nodes are ordered by ancestor-descendant relationships, so that the node of smallest value in any nonempty subset of internal nodes descends from all other elements of the subset. We call this node the \emph{minimal} node of the subset. It is also useful to consider that a tree possesses an internal edge ancestral to its root node; thus, identifying each internal node with its immediate ancestral edge, a nonempty subset of internal edges has a minimal edge.

\subsection{Relationships between pairs of caterpillar trees}
\label{secRelationships}

\noindent The labelings of distinct caterpillar trees with the same label set differ by a permutation of the vector of leaf labels. We will have occasion to examine pairs of caterpillar trees whose labelings differ by specific types of permutation: \emph{nearest-neighbor-interchange} and \emph{subtree-prune-and-regraft} \citep{Steel16}.

Consider two distinct caterpillar trees $G$ and $S$, bijectively labeled from the same set of $n$ distinct labels.

\begin{defi}
Caterpillar trees $G$ and $S$ differ by a \emph{nearest-neighbor-interchange}, or \emph{NNI} move, if $S$ can be obtained from $G$ by exchanging the labels of a pair of adjacent leaves in $G$ that are separated by exactly three edges (Figure~\ref{figTransformations}B).
\label{defiNNI}
\end{defi}

\noindent Note that our definition of adjacent leaves includes the leaves corresponding to labels $g_1$ and $g_2$ in the canonical ordering. This pair is the only pair of adjacent leaves that are not separated by an NNI move.

\begin{defi}
Caterpillar trees $G$ and $S$ differ by a \emph{subtree-prune-and-regraft}, or \emph{SPR} move, if there exists an ordered pair of edges $(e_1,e_2)$ in $G$ with the property that if edge $e_1$ is cut, edge $e_2$ is subdivided in two by placement of a new vertex $v$ of degree two, and the subtree descended from $e_1$ is connected to vertex $v$ such that $v$ now has degree three and is ancestral to the subtree, then tree $S$ is obtained (Figure~\ref{figTransformations}C, \ref{figTransformations}D).
\label{defiSPR}
\end{defi}

\noindent In an SPR move, note that it is possible for the edge $e_2$ to be the edge ancestral to the root of $G$.

\begin{defi}
Caterpillar trees $G$ and $S$ differ by a \emph{cyclic permutation} if there exists a component $G^\prime$ of $G$ and a component $S^\prime$ of $S$ such that the labels of $S^\prime$ represent a cyclic permutation of the labels of $G^\prime$.
\label{defiCyclic}
\end{defi}

\noindent By definition of a component, this definition excludes permutations that simultaneously involve leaves separated from the root by the fewest edges and leaves separated from the root by the most edges, unless all leaves are involved.

\begin{defi}
Caterpillar trees $G$ and $S$ differ by an \emph{incrementation} if they differ by a cyclic permutation and at most one label has positions in the canonical label vectors of $G$ and $S$ that differ by more than one.
\label{defiIncrementation}
\end{defi}

\noindent $S$ can differ from $G$ by a \emph{forward} or a \emph{reverse} cycle or incrementation (Figure~\ref{figTransformations}C, \ref{figTransformations}D). If $S$ differs from $G$ by a forward incrementation or cycle, then $G$ differs from $S$ by a reverse incrementation or cycle, and vice versa. Note that each cyclic permutation that exchanges two leaves is concurrently a forward incrementation, a reverse incrementation, and an NNI move.

We can immediately observe that a pair of caterpillar trees $G$ and $S$ differ by an SPR move if and only if they also differ by an incrementation of the leaf labels. SPR moves that convert caterpillars to caterpillars necessarily prune and regraft a single leaf. If a leaf is pruned from $G$ and regrafted to $S$, then depending on which leaf is pruned and where it is regrafted, $S$ can differ from $G$ by either a forward or a reverse incrementation.  Therefore, enumeration of coalescent histories in the case that caterpillar trees differ by an SPR move is performed by enumeration in the associated case of a forward or a reverse incrementation.

\begin{figure}[tpb]
\vskip -2cm
\begin{center}
\includegraphics[width=9in]{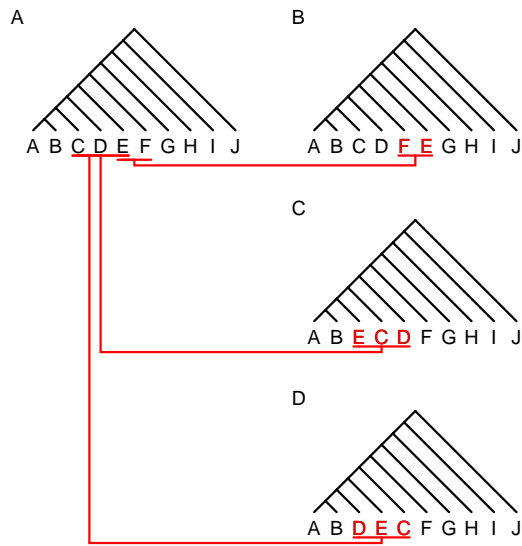}
\end{center}
\vskip -20.5cm
\caption{Transformations of caterpillar trees. (A) A caterpillar tree $G_1$. The vector of labels for $G_1$, in canonical order, is $(A,B,C,D,E,F,G,H,I,J)$. The adjacent pairs of leaves are $(A,B)$, $(A,C)$, $(B,C)$, $(C,D)$, $(D,E)$, $(E,F)$, $(F,G)$, $(G,H)$, $(H,I)$, and $(I,J)$. (B) A tree $G_2$ that differs from $G_1$ by nearest-neighbor-interchange. Leaves $E$ and $F$ are exchanged. (C) A tree obtained from $G_1$ by forward incrementation of leaves $C$, $D$, and $E$. (D) A tree obtained from $G_1$ by reverse incrementation of leaves $C$, $D$, and $E$. The tree in (C) can also be viewed as the result of a subtree-prune-and-regraft operation, with the branch leading to leaf $E$ pruned and regrafted; the tree in (D) can be viewed as the result of an SPR operation involving  the leaf leading to $C$. In each panel, the red line indicates which leaves are permuted.}
\label{figTransformations}
\end{figure}

\subsection{Coalescent histories}
\label{secCoalescentHistories}

\noindent We study coalescent histories for a caterpillar \emph{gene tree} $G$ and a caterpillar \emph{species tree} $S$, treated as binary, rooted, leaf-labeled caterpillar trees, each with $n$ leaves labeled by labels bijectively drawn from the same set $X$. This setting corresponds to considering $G$ to represent the tree formed by sampling a single gene lineage in each of the $n$ species present in species tree $S$. Gene tree $G$ and species tree $S$ are said to be \emph{matching} if $G$ and $S$ have the same labeled topology, and they are said to be \emph{non-matching} otherwise.

Formally, a coalescent history can be defined as follows \citep{RosenbergAndDegnan10}.

\begin{defi}
\label{defiCoalhist}
Consider an ordered pair of binary, rooted, leaf-labeled trees $(G,S)$ whose labels are  bijectively drawn from the same label set $X$. A \emph{coalescent history} is a function $h$ from the set of internal nodes of $G$ to the set of internal edges of $S$ that satisfies two conditions:
\begin{enumerate}
	\item For each internal node $v$ of $G$, all leaf labels for leaves descended from $v$ in $G$ label leaves descended from edge $h(v)$ in $S$.
    \item For all pairs of internal nodes $v_1,v_2$ in $G$, if node $v_2$ is descended from node $v_1$ in $G$, then edge $h(v_2)$ is descended from edge $h(v_1)$ in $S$.
\end{enumerate}
\end{defi}

\noindent An illustration appears in Figure \ref{figCoalescentHistories}. Recall that we consider that $S$ contains an edge ancestral to its root; this edge can be the image of an internal node of $G$ under a coalescent history mapping. Note that because an edge is trivially descended from itself, in part 2 of Definition \ref{defiCoalhist}, it is permissible for $h(v_2)$ to equal $h(v_1)$.

We will have occasion to use the concept of a partial coalescent history.

\begin{defi}
\label{defiPartial}
Consider an ordered pair of binary, rooted, leaf-labeled trees $(G,S)$ whose labels are drawn from the same label set $X$, not necessarily bijectively. A \emph{partial coalescent history} is a function $h$ from the set of internal nodes of $G$ to the set of internal edges of $S$, satisfying the two conditions in Definition \ref{defiCoalhist}.
\end{defi}

\noindent We say that if $G$ is empty, then $(G,S)$ has one partial coalescent history. For nonempty $G$, because the labels in $G$ are not necessarily the same as those of $S$, it is possible that for some nodes $v$ in $G$, $S$ has no edge that can serve as the image of a node in $G$. In this case, the pair $(G,S)$ has no partial coalescent histories.
When connecting the purely graphical definition of coalescent histories in Definition \ref{defiCoalhist} to the biological context in which they arise, we say that an internal node $v$ of $G$ is a gene tree \emph{coalescence}; the coalescence is said to occur on edge $h(v)$ of $S$.

\begin{figure}[tpb]
\vskip -1cm
\begin{center}
\includegraphics[width=\textwidth]{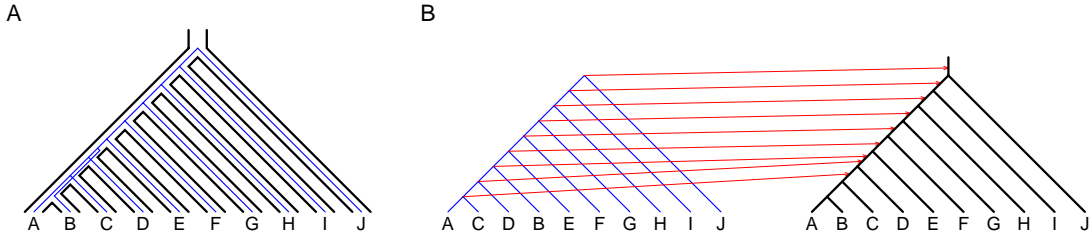}
\end{center}
\vskip -18.5cm
\caption{Coalescent histories. (A) A gene tree $G$ and species tree $S$ with the same label set. The gene tree appears in blue, and the species tree appears in black. (B) The coalescent history depicted in (A) for $(G,S)$. The arrows connect internal nodes of $G$ to their associated edges in $S$.}
\label{figCoalescentHistories}
\end{figure}

\subsection{Catalan numbers and monotonic paths}
\label{secCatalan}

\noindent We recall a number of results concerning Catalan numbers and their use in counting paths along the edges of square lattices. The \emph{Catalan sequence} $\{C_n\}_{n \geqslant 0}$ satisfies
\begin{equation*}
C_n = \frac{1}{n+1}{2n \choose n},
\end{equation*}
beginning from $n=0$, with values 1, 1, 2, 5, 14, 42, 132, 429, 1430, 4862, \ldots

Catalan numbers can be placed in the combinatorial construction known as \emph{Catalan's triangle} \citep{Reuveni14}, of which we display the first several columns:
\begin{equation*}
\begin{matrix}
  &   &   &   &    & 42 \\
  &   &   &   & 14 & 42 \\
  &   &   & 5 & 14 & 28 \\
  &   & 2 & 5 &  9 & 14 \\
  & 1 & 2 & 3 &  4 &  5 \\
1 & 1 & 1 & 1 &  1 &  1
\end{matrix}
\end{equation*}
In this triangle, the initial 1 in the lower left corner is denoted $D(0,0)$. Other entries are denoted $D(n,k)$, with $n$ as the horizontal distance from the lower left corner and $k$ as the vertical distance from this entry.

For $n,k$ with $0 \leqslant k \leqslant n$, the entries $(n,k)$ satisfy the recursion relation
\begin{equation}
\label{eqRecursionCatalan}
    D(n,k)=D(n,k-1)+D(n-1,k),
\end{equation}
with initial condition $D(0,0)=1$. The general formula for $D(n,k)$ is
\begin{equation}
    D(n,k)=
    \begin{cases}
    1 & k=0 \\
    {n+k \choose k}-{n+k \choose k-1} & 1\leqslant k\leqslant n \\
    0 & k>n.
    \end{cases}
\label{eqCatalansTriangle}
\end{equation}
In particular, for $k=n$, we have $D(n,n)=C_n$.

The entry $D(n,k)$ counts the number of monotonic paths on the lattce in the first quadrant of the $(n,k)$ plane (including the coordinate axes) that do not cross the line $k=n$, where a \emph{monotonic path} is a path from $(0,0)$ to $(n,k)$ that proceeds by steps upward and to the right on the lattice.

We will also make use of extensions of Catalan's triangle known as \emph{Catalan's trapezoids of order $m$}, which contain an initial column of $m$ entries equal to 1, rather than a single entry~\citep{Reuveni14}. Entries $D_m(n,k)$ in Catalan's trapezoids satisfy a version of eq.~\ref{eqRecursionCatalan}:
\begin{equation}
\label{eqRecursionCatalanTrapezoid}
    D_m(n,k)=D_m(n,k-1)+D_m(n-1,k).
\end{equation}
We have $D_1(n,k)=D(n,k)$. The first few columns of Catalan's trapezoid of order 3 appear below:
\begin{equation*}
\begin{matrix}
  &   &    &    & 90 \\
  &   &    & 28 & 90 \\
  &   & 9  & 28 & 62 \\
  & 3 & 9  & 19 & 34 \\
1 & 3 & 6  & 10 & 15 \\
1 & 2 & 3  & 4  & 5 \\
1 & 1 & 1  & 1  & 1 \\
\end{matrix}
\end{equation*}

An entry in the trapezoid can be calculated in closed form as
\begin{equation}
    D_{m}(n,k)=\begin{cases} {n+k \choose k} & 0\leqslant k <m \\ {n+k \choose k} - {n+k \choose k-m} & m\leqslant k\leqslant n+m-1 \\ 0 & k > n+m-1.
    \end{cases}
\label{eqCatalansTrapezoid}
\end{equation}
The entry $D_m(n,k)$ in Catalan's trapezoid of order $m$ counts the number of monotonic paths on the lattice in the first quadrant of the $(n,k)$ plane (including the coordinate axes) that do not cross the line $k=n+m-1$.

\section{Bijection of coalescent histories and roadblocked monotonic paths}
\label{secBijection}

\subsection{Matching gene trees and species trees}
\label{secMatching}

\noindent \cite{Degnan05} proved that the number of coalescent histories for a matching caterpillar gene tree $G$ and species tree $S$ with $n$ labels is the Catalan number $C_{n-1}$, demonstrating a bijection between coalescent histories and monotonic paths that do not cross the $y=x$ diagonal of a square lattice. We will discuss this well-known correspondence, as the bijective approach is useful for the non-matching case.

\begin{lemma}
The coalescent histories for a matching $n$-leaf caterpillar gene tree $G$ and species tree $S$ can be bijectively associated with monotonic paths that do not cross the $y=x$ diagonal of an $(n-1) \times (n-1)$ lattice.
\label{lemBijection}
\end{lemma}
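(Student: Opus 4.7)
The plan is to translate coalescent histories into weakly increasing integer sequences with a floor constraint, and then to realize those sequences as monotonic lattice paths that stay weakly below the diagonal. First, I would number the internal nodes of $G$ as $v_1, \ldots, v_{n-1}$ from the cherry outward and identify the internal edges of $S$ with internal nodes numbered analogously, $e_1, \ldots, e_{n-1}$, where $e_i$ is the edge immediately ancestral to the $i$-th internal node in this ordering, following Section~\ref{secCaterpillarTrees}. Because $G$ and $S$ have the same labeled topology, for each $i$ the leaves descended from $v_i$ in $G$ are precisely the leaves descended from $e_i$ in $S$, namely $\{g_1, \ldots, g_{i+1}\}$, and $e_j$ contains this set among its descended leaves if and only if $j \geqslant i$. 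Writing $h(v_i) = e_{a_i}$, the first condition of Definition~\ref{defiCoalhist} is then equivalent to $a_i \geqslant i$ for every $i$. The chain $v_1 \prec v_2 \prec \cdots \prec v_{n-1}$ together with the second condition forces $a_1 \leqslant a_2 \leqslant \cdots \leqslant a_{n-1}$, and these two families of inequalities are also sufficient to guarantee a valid coalescent history. Since $a_{n-1} \leqslant n-1$ is automatic, coalescent histories are in bijection with weakly increasing sequences $(a_1, \ldots, a_{n-1})$ satisfying $i \leqslant a_i \leqslant n-1$.

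Second, I would encode each such sequence as a monotonic path from $(0,0)$ to $(n-1, n-1)$ whose $i$-th upward step occurs at horizontal coordinate $x = a_i$: beginning at the origin, take rightward steps until $x = a_1$, then one upward step, then rightward steps until $x = a_2$, another upward step, and so on, finishing with any trailing rightward steps to $(n-1, n-1)$. Weak monotonicity of $(a_i)$ ensures the construction is consistent, and the constraint $a_i \geqslant i$ is equivalent to the condition that the lattice point $(a_i, i)$ immediately following the $i$-th upward step satisfies $y \leqslant x$, so the path never crosses above the diagonal. Conversely, any diagonal-avoiding monotonic path determines such a sequence by reading off the horizontal coordinates of its upward steps, and the two maps are inverse to one another.

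The hard part is genuinely mild here: it is the careful bookkeeping of the equivalence between edge-level ancestor-descendant relations in the caterpillar species tree and numerical inequalities on indices. Because the identification of internal nodes with ancestral edges reverses the ordering in which indices grow from leaves to root, one must verify that edge $e_j$ is descended from edge $e_i$ precisely when $j \leqslant i$, so that the second condition of Definition~\ref{defiCoalhist} restricted to the totally ordered chain of internal nodes of a caterpillar is equivalent to the single inequality $a_i \leqslant a_{i+1}$ for all $i$. Once this correspondence is pinned down, composing the two bijections of the previous paragraphs yields the lemma, and the total count is $D(n-1,n-1) = C_{n-1}$ by the discussion of Catalan's triangle in Section~\ref{secCatalan}.
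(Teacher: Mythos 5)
Your proof is correct, and it establishes the same underlying correspondence as the paper (right-steps track species-tree edges, up-steps track coalescences), but by a genuinely different argument. The paper never writes down the explicit bijection as a pair of inverse maps; instead it introduces the sets $A_{i,j}$ of \emph{partial} coalescent histories for $(G_j,S_i)$, derives the recursion $|A_{i,j}| = |A_{i,j-1}| + |A_{i-1,j}|$ with boundary conditions, and observes that this is exactly the recursion (eq.~\ref{eqRecursionCatalan}) counting diagonal-avoiding monotonic paths, reading the bijection off from the recursion. You instead characterize coalescent histories directly as weakly increasing sequences $(a_1,\ldots,a_{n-1})$ with $a_i \geqslant i$, and then encode such sequences as paths; your reduction of Definition~\ref{defiCoalhist} to the two families of inequalities is carried out correctly (the only delicate point, that $e_j$ is descended from $e_i$ iff $j \leqslant i$ under the paper's node-to-ancestral-edge identification, is exactly the one you flag and resolve). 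The trade-off is worth noting: your direct description is cleaner and self-contained for the matching case, and it would extend to the non-matching case by replacing the floor $a_i \geqslant i$ with $a_i \geqslant f(g_{i+1})$; but the paper's recursive formulation is the machinery it actually reuses in Section~\ref{secNonmatching}, where the same recursion with additional zero constraints ($|A_{i,j}|=0$ at roadblocks) yields Proposition~\ref{propBijection} with almost no extra work. Either route proves the lemma, and both give the count $D(n-1,n-1)=C_{n-1}$.
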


\begin{proof}
Label the internal nodes of $G$ sequentially from 1 to $n-1$, using 1 for the internal node nearest the cherry and $n-1$ for the root. For each internal node of $G$, identify the label for the node with the edge immediately ancestral to it. Similarly, sequentially label the internal nodes of $S$ from 1 to $n-1$, proceeding from the cherry toward the root and identifying the label for each node with its immediate ancestral edge.

For each $j$ with $1 \leqslant j \leqslant n-1$, denote by $G_j$ the subtree of the gene tree rooted at node $j$, and for each $i$ with $1 \leqslant i \leqslant n-1$, denote by $S_i$ the subtree of the species tree rooted at node $i$. We also define $G_0$ and $S_0$ to be empty subtrees of the gene tree and species tree, respectively. Denote by $A_{i,j}$ the set of partial coalescent histories for $(G_j,S_i)$. For matching $G$ and $S$, for each $j$ with $0 \leqslant j \leqslant n-1$, $G_j=S_j$. Hence, by definition of a coalescent history, for each internal node $j\geqslant 1$ of $G$, the image $h(j)$ in a coalescent history $h$ of $(G,S)$ must be ancestral in $S$ to all leaves of $S$ labeled by labels in $G_j$. The edges of $S$ with this property are edges $j, j+1, \ldots, n-1$. For $j \geqslant 1$, we have $j \leqslant h(j) \leqslant n-1$, and $|A_{i,j}|=0$ for all $(i,j)$ with $i < j$.

Each partial coalescent history in $A_{i,j}$ is formed in one of two ways. Gene tree node $j \geqslant 1$ is mapped either to species tree internal edge $i$, or to one of the edges $1,2,\ldots,i-1$. The former case produces $|A_{i,j-1}|$ partial coalescent histories, each obtained by appending the coalescence of gene tree node $j$ to a partial coalescent history for $(G_{j-1},S_i)$. The latter case produces $|A_{i-1,j}|$ partial coalescent histories; because no gene tree coalescences in such a partial coalescent history occur on species tree edge $i$, each such partial coalescent history for $(G_j,S_i)$ is a partial coalescent history for $(G_j,S_{i-1})$. Hence, we have
\begin{equation}
\label{eqRecursionA}
|A_{i,j}| = |A_{i,j-1}| + |A_{i-1,j}|,
\end{equation}
with the constraint $|A_{i,j}|=0$ for $j\geqslant 1$ and $i < j$. For $j=0$ and $0 \leqslant i \leqslant n-1$, we have $|A_{i,0}|=1$ by the convention that $(G,S)$ has one partial coalescent history for empty $G$. We set $|A_{i,j}|=0$ for all $(i,j)$ that do not satisfy $0 \leqslant i,j \leqslant n-1$.

Recursion \ref{eqRecursionA} and its base cases, with $i$ in the role of $n$ and $j$ in the role of $k$, is precisely eq.~\ref{eqRecursionCatalan}. Setting $i=j=n-1$, eq.~\ref{eqRecursionCatalan} gives the recursion for enumerating the set of monotonic paths that do not cross the $y=x$ diagonal of an $(n-1) \times (n-1)$ square lattice, a set with $C_{n-1}$ elements. In the bijection between coalescent histories and monotonic paths, each step to the right in the lattice, incrementing $i$, corresponds to incorporating an additional edge of the species tree as a possible location for gene tree coalescences, and each step up, incrementing $j$, corresponds to occurrence of a gene tree coalescence.
\end{proof}

We can read a coalescent history of $(G,S)$ from its associated monotonic path (Figure \ref{figLatticeMatching}). For example, in a 10-leaf tree, the monotonic path that proceeds through (0,0), (3,0), (3,2), (6,2), (6,3), (7,3), (7,7), (9,7), and (9,9) has no gene tree coalescences on edge 1 of the species tree above $(A,B)$ or on edge 2 above $((A,B),C)$. Gene tree coalescences $(A,B)$ and $((A,B),C)$ occur on edge 3 above species tree node $(((A,B),C),D)$.  No gene tree coalescences occur on edges 4 or 5. Gene tree coalescence $(((A,B),C),D)$ occurs on edge 6. Four gene tree coalescences occur on edge 7 above species tree node $(((((((A,B),C),D),E),F),G),H)$. The two remaining gene tree coalescences occur on edge 9 above the species tree root.

The bijection between coalescent histories and monotonic paths generates a set of values of $|A_{i,j}|$ that considers each $i$ and $j$ with $0 \leqslant i,j \leqslant n-1$ and $i \geqslant j$. These values can be depicted in a lattice so that the value $|A_{i,j}|$ is associated with the coordinate of lattice point $(i,j)$ (Figure \ref{figLatticeMatching}). Indeed, they correspond exactly to the entries of Catalan's triangle (eq.~\ref{eqCatalansTriangle}), with $i$ in the role of $n$ and $j$ in the role of $k$.

\begin{figure}[tpb]
\vskip -1.4cm
\begin{center}
\includegraphics[width=7in]{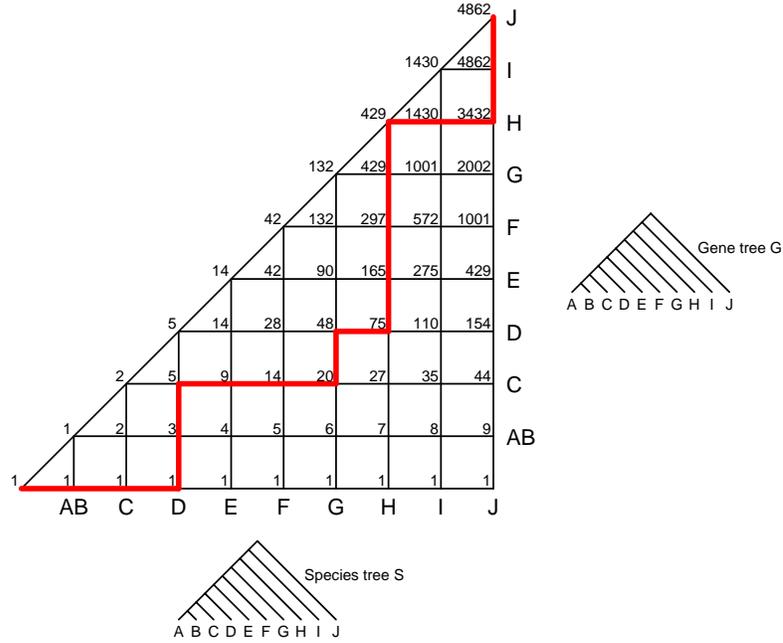}
\end{center}
\vskip -14.5cm
\caption{The correspondence between monotonic paths that do not cross above the $y=x$ diagonal of an $(n-1) \times (n-1)$ square lattice and coalescent histories for a matching caterpillar gene tree and species tree with $n=10$ leaves. The lower left corner represents the origin $(0,0)$. Monotonic paths from $(0,0)$ to $(i,j)$ represent the partial coalescent histories $A_{i,j}$ for $(G_j,S_i)$. Values $|A_{i,j}|$ are taken from eq.~\ref{eqRecursionCatalan}, using $(i,j)$ in place of $(n,k)$. Species tree internal edges are read from left to right: $AB$ labels the species tree internal edge from which $A$ and $B$ descend, and each successive label indicates the internal edge ancestral both to the leaf corresponding to the associated label and to the caterpillar subtree containing all prior labels. Gene tree internal nodes are read in the same manner from bottom to top. The monotonic path shown in red indicates the locations on the species tree of the gene tree coalescences of a specific coalescent history.}
\label{figLatticeMatching}
\end{figure}

The construction takes advantage of the caterpillar shape of both gene tree and species tree. Because internal nodes of a caterpillar tree can be placed in order with each entry descended from the next until the root is reached, simply stating the next leaf label suffices to specify the leaves descended from the next internal node. Movement from left to right in Figure \ref{figLatticeMatching} indicates movement from the cherry of the species tree toward the root, and movement from bottom to top indicates coalescence in the gene tree.

\subsection{Non-matching gene trees and species trees}
\label{secNonmatching}

\noindent Our key insight is that a version of the construction of \cite{Degnan05} linking coalescent histories and monotonic paths applies even if the gene tree and species tree are non-matching, provided that both continue to be caterpillars. Coalescent histories for non-matching caterpillars can be associated with \emph{roadblocked} monotonic paths that do not cross above the $y=x$ diagonal of an $(n-1) \times (n-1)$ square lattice.

\begin{defi}
In a lattice, a \emph{roadblocked monotonic path} is a monotonic path that is not permitted to pass through certain specified lattice points. We term these lattice points \emph{roadblocks}.
\label{defiRoadblocked}
\end{defi}

Consider a caterpillar gene tree $G$ and a caterpillar species tree $S$, whose leaves are both bijectively associated with the same set of $n$ leaves, but that do not necessarily match. As in Section \ref{secMatching}, we associate points on the x-axis of an $(n-1) \times (n-1)$ lattice with species tree internal edges in $S$, and we associate points on the y-axis with gene tree internal nodes in $G$. We continue to label internal nodes of $G$ and $S$ in increasing order from 1 to $n-1$, from the cherry to the root, indexing the gene tree internal nodes by $j$ and the species tree internal nodes by $i$.

As is true in the matching case, for each $j$ from 1 to $n-1$, each coalescent history must have $h(j) \geqslant j$, as a gene tree internal node $j$ must map to a species tree internal edge ancestral to at least as many leaves as descend from node $j$ in $G$. Hence, each coalescent history for $(G,S)$ corresponds to a monotonic path that has $j \leqslant i$ and hence does not cross the $y=x$ diagonal of the lattice. However, an additional constraint is imposed by the fact that $G$ and $S$ do not necessarily match.

Given $G$ and $S$, let $\pi(G)$ denote the permutation of the gene tree leaf labels $\mathbf{g} = (g_1, g_2, \ldots, g_n)$ represented by the species tree leaf labels $\mathbf{s} = (s_1, s_2, \ldots, s_n)$. The action of $\pi$ sends the vector of leaf labels from one $n$-tuple to another, and we denote the index in $S$ of $g_k$, the $k$th label of $G$, by $\pi_k(G)$.

For the leaf labels $g_1, g_2, \ldots, g_n$ in $G$, let $f(g_k)$ denote the minimal internal edge of $S$ ancestral to leaf $s_{\pi_k(G)}$, the species tree leaf with label $g_k$. For a matching gene tree and species tree $(G,S)$, $\pi$ is the identity permutation so that $\pi_k(G)=k$; we then have $f(g_1)=f(g_2)=1$, and $f(g_k)=k-1$ for $3 \leqslant k \leqslant n$.

For general $(G,S)$ that do not necessarily match, by Definition \ref{defiCoalhist}, (i) if $k=1$ or $k=2$, then $f(g_k)=\max_{\ell \in \{1,2\}} \pi_\ell(G)-1$, and (ii) if $3 \leqslant k \leqslant n$, then $f(g_k)=\max_{\ell \in \{1,2,\ldots,k\}} \pi_\ell(G)-1$. This rule encodes the fact that a gene tree coalescence can occur only on a species tree edge ancestral to all species tree leaves labeled by the elements of the set of labels for leaves descended from the gene tree coalescence.

Consider the partial coalescent histories $A_{i,j}$ with $i \geqslant j$. As in Section \ref{secMatching}, for $j \geqslant 1$, $|A_{i,j}|=0$ for all $(i,j)$ with $i<j$. For each $j$ from 1 to $n-1$, the minimal internal edge of $S$ that is ancestral to all leaves labeled by labels of leaves of $G$ that descend from gene tree internal node $j$ is $f(g_{j+1})$. Therefore, for $j \geqslant 1$, we have $|A_{i,j}|=0$ for all $(i,j)$ with $i < f(g_{j+1})$. Note that these $(i,j)$ are the only roadblocks: for $j \geqslant 1$, $f(g_{j+1}) \geqslant j$, as $f(g_{j+1})$ is one less than the maximum of $j+1$ distinct elements of $\{1, 2, \ldots, n-1\}$, a quantity greater than or equal to $j$. For $j \geqslant 1$, because $|A_{i,j}|=0$ for all lattice points $(i,j)$ with $i < f(g_{j+1})$, all such points are roadblocks.

We also note that for $1 \leqslant j \leqslant j^\prime \leqslant n-1$, $f(g_{j^\prime+1}) \geqslant f(g_{j+1})$. The set of descendant leaves of internal node $j^\prime+1$ of $G$ contains as a subset the descendant leaves of internal node $j+1$ of $G$. Hence, the minimal internal edge of $S$ ancestral to all labels that label leaves descended from internal node $j^\prime+1$ of $G$ has an index at least as great as the corresponding internal edge of $S$ associated with internal node $j+1$ of $G$. Consequently, if $(i,j)$ is a roadblock, then because $i < f(g_{j+1})$ and $f(g_{j^\prime+1}) \geqslant f(g_{j+1})$ for $j^\prime \geqslant j$, we can conclude that $(i,j^\prime)$ is a roadblock for each $j^\prime$ with $j \leqslant j^\prime \leqslant i$.

As in Section \ref{secMatching}, each partial coalescent history in $A_{i,j}$ is formed in one of two ways. For $j\geqslant 1$, gene tree node $j$ is mapped either to species tree internal edge $i$, or to one of the edges $1, 2, \ldots, i-1$. The former case produces $|A_{i,j-1}|$ partial coalescent histories, and the latter produces $|A_{i-1,j}|$. Hence, the recursion $|A_{i,j}| = |A_{i,j-1}| + |A_{i-1,j}|$ is still satisfied. We still have the constraints $|A_{i,j}|=0$ for $j \geqslant 1$ and $i<j$, $|A_{i,0}|=1$ for $j=0$ and $0 \leqslant i \leqslant n-1$, and $|A_{i,j}|=0$ for all $(i,j)$ that do not satisfy $0 \leqslant i,j \leqslant n-1$. We also have the new constraint $|A_{i,j}|=0$ for all $(i,j)$ that satisfy $i < f(g_{j+1})$.

The set of roadblocks for $(G,S)$ is defined by $B_{G,S} = \{(i,j) \, | \, 1 \leqslant j \leqslant i \leqslant n-1 \text{ and } i < f(g_{j+1}) \}$. We have therefore demonstrated the following proposition.

\begin{prop}
Consider a caterpillar gene tree $G$ and a caterpillar species tree $S$, both bijectively associated with the same set of $n$ leaf labels, but that do not necessarily match. Then $(G,S)$ can be associated with a set of roadblocks $B_{G,S}$ such that the coalescent histories for $(G,S)$ bijectively correspond to roadblocked monotonic paths that do not cross the $y=x$ diagonal of an $(n-1) \times (n-1)$ lattice.
\label{propBijection}
\end{prop}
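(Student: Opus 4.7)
The plan is to mirror the argument of Lemma \ref{lemBijection} and leverage the machinery already assembled in the paragraphs leading to the proposition. First, I would label the internal nodes of $G$ and $S$ sequentially from the cherry (index $1$) to the root (index $n-1$), identifying each internal node with its immediate ancestral edge, and associate the lattice point $(i,j)$ with the set of partial coalescent histories $A_{i,j}$ of $(G_j, S_i)$. In this picture, a step to the right at height $j$ signals ``no gene tree coalescence is assigned to species tree edge $i$,'' while a step upward at abscissa $i$ signals ``gene tree node $j$ is mapped to species tree edge $i$.''

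Next, I would verify the full list of constraints on $|A_{i,j}|$: the initial condition $|A_{i,0}|=1$ for $0 \leqslant i \leqslant n-1$ from the empty-subtree convention, the diagonal constraint $|A_{i,j}|=0$ for $i<j$ inherited verbatim from the matching case, and the new roadblock constraint $|A_{i,j}|=0$ for $i<f(g_{j+1})$ coming from condition~(i) of Definition~\ref{defiCoalhist}: gene tree node $j$ must be mapped to a species tree edge ancestral to \emph{every} species leaf whose label agrees with a label descended from $j$ in $G$, and by construction $f(g_{j+1})$ is precisely the minimal such edge. The set of $(i,j)$ violating this last inequality is exactly $B_{G,S}$.

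With constraints in hand, I would recover the recursion $|A_{i,j}|=|A_{i,j-1}|+|A_{i-1,j}|$ by partitioning partial coalescent histories for $(G_j,S_i)$ according to whether $h(j)=i$ (giving $|A_{i,j-1}|$, i.e.\ an upward step from $(i,j-1)$) or $h(j)<i$ (giving $|A_{i-1,j}|$, i.e.\ a rightward step from $(i-1,j)$). This is structurally identical to the recursion in Lemma~\ref{lemBijection}, so the count $|A_{n-1,n-1}|$ coincides with the number of monotonic paths from $(0,0)$ to $(n-1,n-1)$ that stay weakly below the diagonal and avoid every roadblock in $B_{G,S}$.

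The subtle step I expect to require the most care is checking that a lattice path avoiding the roadblocks truly corresponds to a \emph{valid} coalescent history, in particular that condition~(ii) of Definition~\ref{defiCoalhist} (ancestor--descendant preservation) is automatically encoded. The key observation is the monotonicity $f(g_{j'+1}) \geqslant f(g_{j+1})$ for $j'\geqslant j$, proved by noting that the label set descending from gene tree node $j'+1$ contains the set descending from $j+1$. This monotonicity makes the roadblock region ``upward-closed'' within the admissible triangle $\{(i,j): j\leqslant i\leqslant n-1\}$, so that the roadblocks filter out precisely the paths in which some ancestor of a gene tree node would be forced onto a species edge descended from its descendant's image. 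Together with the fact that internal nodes of $G$ and internal edges of $S$ are indexed in descent-then-ancestor order, any monotonic roadblocked path assigns images consistent with condition~(ii); conversely, every coalescent history yields such a path. Collecting the counting recursion and these bijective checks establishes the proposition.
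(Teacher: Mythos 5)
Your proposal is correct and follows essentially the same route as the paper, which establishes the proposition by extending the recursion $|A_{i,j}|=|A_{i,j-1}|+|A_{i-1,j}|$ of Lemma \ref{lemBijection} with the additional constraint $|A_{i,j}|=0$ for $i<f(g_{j+1})$, exactly as you do. One small caution: condition (ii) of Definition \ref{defiCoalhist} is enforced by the monotonic-path structure itself (the up-steps occur at nondecreasing abscissae, so $h$ is automatically nondecreasing, just as in the matching case), not by the roadblocks, which encode only condition (i); your final paragraph slightly conflates these two roles, though your conclusion is unaffected.
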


By definition of $B_{G,S}$, we immediately see that if $(i,j)$ is a roadblock for $1 \leqslant j \leqslant i \leqslant n-1$, then $(k,j)$ is a roadblock as well for each $k$ with $j \leqslant k \leqslant i$. We can also see that if $(i,j)$ is a roadblock for $1 \leqslant j \leqslant i \leqslant n-1$, then $(i, \ell)$ is a roadblock as well for each $\ell$ with $j \leqslant \ell \leqslant n-1$; this result follows from the fact that $f(g_{j^\prime + 1}) \geqslant f(g_{j+1})$ for $1 \leqslant j \leqslant j^{\prime} \leqslant n-1$. We have the following remark.
\begin{rem}
Consider a caterpillar gene tree $G$ and a caterpillar species tree $S$. The roadblock set $B_{G,S}$ consists of a set of points $(i,j)$ with $1 \leqslant j \leqslant i \leqslant n-1$ such that if $(i,j) \in B_{G,S}$, then (i) $(k,j) \in B_{G,S}$ for all $k$ with $j \leqslant k \leqslant i$, and (ii) $(j, \ell) \in B_{G,S}$ for all $\ell$ with $j \leqslant \ell \leqslant i$.
\label{remBGS}
\end{rem}

Figure \ref{figLatticeNonmatching} illustrates the correspondence between coalescent histories and roadblocked monotonic paths. In Figure \ref{figLatticeNonmatching}, we have $(f(g_1), f(g_2), f(g_3), f(g_4), f(g_5), f(g_6), f(g_7), f(g_8), f(g_9)) = (5,5,5,5,6,8,8,9,9)$. Because $f(g_{1+1})=5$, $(4,1)$ is a roadblock, as are
$(3,1)$, $(2,1)$, and $(1,1)$ for the same reason ($(i,j)$ is a roadblock if $j \leqslant i < f(g_{j+1})$). Because $f(g_{2+1})=5$, $(4,2)$ is also a roadblock, as are $(3,2)$ and $(2,2)$. We can also identify $(4,2)$, $(3,2)$, and $(2,2)$ by Remark \ref{remBGS} as roadblocks as a consequence of the fact that $(4,1)$, $(3,1)$, and $(2,1)$ are roadblocks. Continuing through all $(i,j)$, we identify 15 roadblocks in Figure \ref{figLatticeNonmatching}.

\begin{figure}[tpb]
\vskip -1.4cm
\begin{center}
\includegraphics[width=7in]{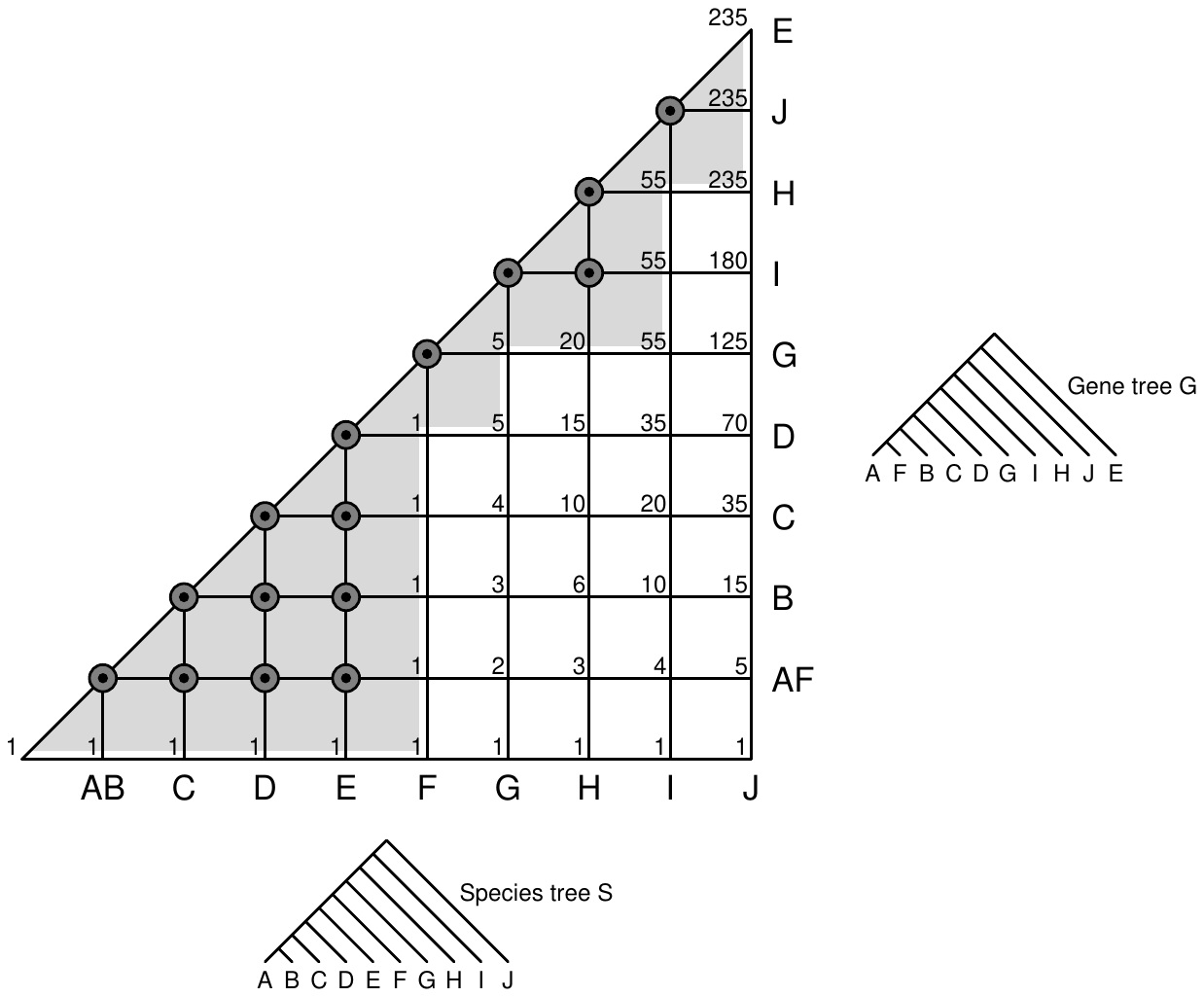}
\end{center}
\vskip -14.5cm
\caption{The correspondence between monotonic paths that do not cross above the $y=x$ diagonal of an $(n-1) \times (n-1)$ square lattice and coalescent histories for a non-matching caterpillar gene tree and species tree with $n=10$ leaves. Roadblocks are indicated by circles on lattice points; no roadblocked monotonic paths traverse the shaded regions. The lower left corner represents the origin $(0,0)$. Monotonic paths from $(0,0)$ to $(i,j)$ represent the partial coalescent histories $A_{i,j}$ for $(G_j,S_i)$.  Values $|A_{i,j}|$ are taken from eq.~\ref{eqRecursionCatalan}, using $(i,j)$ in place of $(n,k)$. Species tree internal edges are read from left to right: $AB$ labels the species tree internal edge from which $A$ and $B$ descend, and each successive label indicates the internal edge ancestral both to the leaf corresponding to the associated label and to the caterpillar subtree containing all prior labels. Gene tree internal nodes are read in the same manner from bottom to top.}
\label{figLatticeNonmatching}
\end{figure}

From Proposition \ref{propBijection}, we immediately obtain that the number of coalescent histories for $(G,S)$ is given by the number of roadblocked monotonic paths that do not cross above the $y=x$ diagonal of an $(n-1) \times (n-1)$ lattice, where the roadblocks are those in the set $B_{G,S}$. We also obtain a simple proof of the following corollary, which appeared as Remark 15 of \cite{DegnanAndRhodes15}.

\begin{coro} Consider a caterpillar gene tree topology $G$ and a caterpillar species tree topology $S$. The number of coalescent histories for $(G,S)$ is strictly greater for $G=S$ than for each choice of $G \neq S$.
\label{coroStrictlyGreater}
\end{coro}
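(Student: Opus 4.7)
The plan is to use Proposition \ref{propBijection} to reduce the corollary to a comparison of counts of roadblocked monotonic paths. I would first verify that for $G = S$ the roadblock set $B_{G,S}$ is empty: in this case $\pi$ is the identity permutation, so $f(g_1) = f(g_2) = 1$ and $f(g_{j+1}) = j$ for $j \geqslant 2$, whence the defining inequality $i < f(g_{j+1})$ combined with $j \leqslant i$ has no solutions. The matching case thus counts the unrestricted total of $C_{n-1}$ Dyck paths in the $(n-1) \times (n-1)$ grid. It then suffices to show that for every $G \neq S$ at least one lattice point of $B_{G,S}$ is blocked \emph{and} lies on some matching-case Dyck path, so that the number of roadblocked paths is strictly smaller than $C_{n-1}$.

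To exhibit a roadblock when $G \neq S$, I would let $m$ be the smallest index with $\pi_m(G) > m$. Such an $m$ must exist, since if $\pi_k(G) \leqslant k$ for every $k$, then an inductive bijectivity argument forces $\pi_k(G) = k$ for all $k$, hence $\mathbf{g} = \mathbf{s}$ and $G = S$. If $m \geqslant 3$, then $\pi_m(G) \geqslant m+1$ gives $f(g_m) \geqslant \pi_m(G) - 1 \geqslant m$, and the point $(i,j) = (m-1,\, m-1)$ satisfies $i = m - 1 < m \leqslant f(g_{j+1})$, so it belongs to $B_{G,S}$. If $m \in \{1,2\}$, a short case analysis shows that $\max(\pi_1(G), \pi_2(G)) \geqslant 3$: the only way to have $\max(\pi_1, \pi_2) \leqslant 2$ with $\pi$ not the identity is the transposition $(1\;2)$, which exchanges the cherry leaves and produces a labeling equivalent to $\mathbf{s}$, contradicting $G \neq S$. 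Thus $f(g_2) \geqslant 2 > 1$, and $(1,1) \in B_{G,S}$.

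Finally, given any roadblock $(i_0, j_0)$ with $1 \leqslant j_0 \leqslant i_0 \leqslant n-1$, I would construct an explicit Dyck path in the matching-case lattice passing through it, namely
\[
(0,0) \to (i_0, 0) \to (i_0, j_0) \to (n-1, j_0) \to (n-1, n-1),
\]
which is monotonic and satisfies $y \leqslant x$ throughout because $j_0 \leqslant i_0 \leqslant n-1$. This path is counted by the matching case but excluded once $(i_0, j_0)$ is blocked, so the number of roadblocked monotonic paths is at most $C_{n-1} - 1$, yielding the strict inequality. The main obstacle I anticipate is the cherry-swap bookkeeping in the $m \in \{1,2\}$ case, where two distinct permutations (the identity and $(1\;2)$) correspond to the same labeled tree and must be cleanly excluded in order to guarantee that a genuine roadblock is produced.
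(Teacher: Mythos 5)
Your overall strategy is exactly the paper's: reduce to roadblocked monotonic paths via Proposition \ref{propBijection}, observe that the matching case has an empty roadblock set and hence $C_{n-1}$ paths, exhibit at least one roadblock when $G \neq S$, and conclude strictness because some path passes through that roadblock. Your explicit path $(0,0) \to (i_0,0) \to (i_0,j_0) \to (n-1,j_0) \to (n-1,n-1)$ is a nice concrete justification of a step the paper merely asserts (``each lattice point has at least one monotonic path that passes through it''), and it is correct since $j_0 \leqslant i_0$ keeps the path weakly below the diagonal throughout.

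There is, however, a genuine (if small) gap in your roadblock-existence argument, precisely at the cherry-swap bookkeeping you flag at the end. Your claim that ``the only way to have $\max(\pi_1,\pi_2) \leqslant 2$ with $\pi$ not the identity is the transposition $(1\;2)$'' is false: take $\pi = (1\;2)(4\;5)$, say. Then your $m$ (smallest index with $\pi_m(G) > m$) equals $1$, your case $m \in \{1,2\}$ applies, but $\max(\pi_1,\pi_2) = 2$, so $f(g_2) = 1$ and $(1,1)$ is \emph{not} a roadblock; the argument for that branch collapses, even though $G \neq S$ and a roadblock does exist further up (coming from the displaced labels $4$ and $5$). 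The fix is to normalize first: since $g_1$ and $g_2$ are exchangeable, whenever $\{\pi_1(G),\pi_2(G)\} = \{1,2\}$ you may replace $\mathbf{g}$ by its equivalent representative so that $\pi_1 = 1$, $\pi_2 = 2$; after this normalization $G \neq S$ forces either $m \geqslant 3$ (handled by your first case) or genuinely $\max(\pi_1,\pi_2) \geqslant 3$. The paper sidesteps this entirely by working with label \emph{sets} rather than the permutation: it picks an internal node $j$ of $G$ whose set of descendant leaf labels differs from that of node $j$ of $S$ (equivalently, $\{\pi_1,\ldots,\pi_{j+1}\} \neq \{1,\ldots,j+1\}$), which is insensitive to the cherry swap and immediately yields the roadblock $(j,j)$. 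With that one repair your proof is complete and follows the same route as the paper's.
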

\begin{proof} By Proposition \ref{propBijection}, coalescent histories for $(G,S)$ correspond to roadblocked monotonic paths that do not cross the $y=x$ diagonal of an $(n-1) \times (n-1)$ lattice.

In the case that $G=S$, applying Lemma \ref{lemBijection}, the number of coalescent histories is the number of monotonic paths that do not cross the $y=x$ diagonal of the lattice.

Adding a roadblock to the lattice necessarily reduces the number of monotonic paths from $(0,0)$ to $(n-1,n-1)$, as each lattice point has at least one monotonic path that passes through it. Because the number of coalescent histories for $(G,S)$ is equal to the number of roadblocked monotonic paths on the lattice, it suffices to show that for $G \neq S$, at least one lattice point is a roadblock.

Because $G \neq S$, there exists some internal node $j$ of $G$ at least one of whose descendant leaves has a label not contained in the label set of the leaves descended from internal node $j$ of $S$.  This leaf has $j < f(g_{j+1})$. Hence, $(j,j)$ is a roadblock, and $(G,S)$ is associated with fewer monotonic paths than is $(S,S)$.
\end{proof}

\subsection{Roadblock sets}
\label{secRoadblockSets}

\noindent Given a caterpillar species tree $S$, Remark \ref{remBGS} suggests a characterization of the possible sets of roadblocks, considering all caterpillar gene trees $G$. Each roadblock set has the property that within a row, all points to the left of a roadblock and on or below the $y=x$ diagonal are also roadblocks. Within a column, all points above a roadblock and on or below the $y=x$ diagonal are roadblocks.

\begin{prop}
Consider a caterpillar species tree topology $S$ with $n$ leaves. For each caterpillar gene tree topology $G$ with $n$ leaves, denote its associated roadblock set by $B_{G,S}$. Considering all $n!/2$ possible caterpillar gene tree topologies, the distinct roadblock sets are bijectively associated with the $C_{n-1}$ monotonic paths on the $(n-1) \times (n-1)$ lattice that do not cross the $y=x$ diagonal.
\label{propRoadblocks}
\end{prop}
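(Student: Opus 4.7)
The plan is to encode each roadblock set by a short sequence of integers, identify the admissible sequences with monotonic paths, and then show that every admissible sequence is realized by some gene tree. First I would associate to each pair $(G, S)$ the sequence $\phi = (\phi_1, \ldots, \phi_{n-1})$ with $\phi_j = f(g_{j+1})$, using the notation of Section~\ref{secNonmatching}. By the definition of $B_{G, S}$ the row-$j$ roadblocks are exactly $\{(i,j) : j \leqslant i \leqslant \phi_j - 1\}$, so $\phi$ and $B_{G,S}$ determine each other, and counting distinct $B_{G,S}$ reduces to counting admissible $\phi$. The arguments in Section~\ref{secNonmatching} and Remark~\ref{remBGS} show that each admissible $\phi$ satisfies $j \leqslant \phi_j \leqslant n-1$ and $\phi_1 \leqslant \phi_2 \leqslant \cdots \leqslant \phi_{n-1}$, which in turn forces $\phi_{n-1} = n-1$.

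Next I would exhibit the bijection between such sequences $\phi$ and monotonic paths from $(0,0)$ to $(n-1, n-1)$ that do not cross the diagonal $y = x$. Send $\phi$ to the path whose $j$-th upward step lifts from $(\phi_j, j-1)$ to $(\phi_j, j)$; the constraints on $\phi$ keep the path weakly below the diagonal and force it to end at $(n-1, n-1)$, and conversely any such path recovers a unique admissible $\phi$ from the $x$-coordinates of its upward steps. This is the standard Catalan-triangle encoding, so the number of admissible sequences is exactly $C_{n-1}$.

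The main obstacle is the surjectivity step: showing every admissible $\phi$ is realized as $\phi_j = f(g_{j+1})$ for some caterpillar gene tree $G$. Given $\phi$, I would build the permutation $\pi$ encoding $G$ (via $g_k = s_{\pi_k}$) greedily. Set $\pi_1 = \phi_1 + 1$ and pick $\pi_2$ from $\{1, \ldots, \phi_1\}$; then for $j \geqslant 2$, set $\pi_{j+1} = \phi_j + 1$ whenever $\phi_j > \phi_{j-1}$, and otherwise choose any unused value in $\{1, \ldots, \phi_j\}$. The key calculation is that at any stage where a non-maximal choice is needed, exactly $j-1$ elements of $\{1, \ldots, \phi_j\}$ have been consumed, so the bound $\phi_j \geqslant j$ precisely guarantees an unused value exists. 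A short induction gives $\max(\pi_1, \ldots, \pi_{j+1}) = \phi_j + 1$ at every stage, so the resulting caterpillar gene tree $G$ realizes $\phi$. Combining surjectivity with the injectivity of the encoding completes the bijection and establishes the claim of $C_{n-1}$ distinct roadblock sets.
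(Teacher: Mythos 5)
Your proof is correct, and its core encoding---identifying each roadblock set with the monotonic path that hugs its boundary---is the same as the paper's, up to a transposition: the paper records, for each column $i$, the largest non-roadblock row $j_i$, whereas you record, for each row $j$, the column $\phi_j = f(g_{j+1})$ at which the path's $j$-th up-step occurs; these produce the same path. Where you genuinely part ways is on surjectivity. The paper's proof takes a monotonic path and simply declares the points above it in each column to be roadblocks; it never exhibits a caterpillar gene tree $G$ whose roadblock set $B_{G,S}$ equals that candidate set, so strictly speaking it only shows that every path yields a set with the staircase structure of Remark~\ref{remBGS}, not that the set is realized. Your greedy construction of the permutation $\pi$---setting $\pi_{j+1}=\phi_j+1$ at each strict increase of $\phi$ and otherwise drawing an unused value from $\{1,\ldots,\phi_j\}$, with the count of $j-1$ consumed values against the bound $\phi_j \geqslant j$ guaranteeing availability, and the invariant $\max(\pi_1,\ldots,\pi_{j+1})=\phi_j+1$ certifying that $G$ realizes $\phi$---closes exactly this gap. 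So your argument is both valid and strictly more complete than the paper's: the paper's version buys brevity, while yours supplies an explicit witness for every admissible roadblock set, which is also what is needed to justify the paper's subsequent assertion that every ``caterpillar-friendly'' set of lattice points is the roadblock set of some caterpillar gene tree.
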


\begin{proof}
Consider a roadblock set $B_{G,S}$. For each $i$ from 1 to $n-2$, we identify the largest $j$ such that $(i,j)$ is not a roadblock. Call this value $j_i$. A unique monotonic path connects $(0,0), (1,j_1), (2,j_2), \ldots, (n-2,j_{n-2}), (n-1,n-1)$: by Remark \ref{remBGS}, for each $i$ and each $j > j_i$, $(i,j)$ is either a roadblock or it lies above the $y=x$ line. Hence, denoting $j_0=0$ and $j_{n-1}=n-1$, for each $i$ from 1 to $n-1$, a monotonic path from $(i-1,j_{i-1})$ to $(i,j_i)$ must proceed horizontally by length  1 and then vertically by length $j_i - j_{i-1}$.

To show that this construction is injective, note that distinct monotonic paths are associated with distinct roadblock sets: consider a point $(i,j_i)$ appearing in one monotonic path $P_1$ but not in another one, $P_2$. Because $j_i$ is the largest value of $j$ that is not a roadblock for path $P_1$, $(i,j_i)$ must be a roadblock for $P_2$.

For surjectivity, consider a monotonic path from $(0,0)$ to $(n-1,n-1)$ that does not cross the $y=x$ line. For each $(i,j)$ in the path, $1 \leqslant i \leqslant n-2$, we assign each point $(i, \ell)$ with $j < \ell \leqslant i$ to be a roadblock.
\end{proof}

\begin{figure}[tpb]
\vskip -1.5cm
\begin{center}
\includegraphics[width=7in]{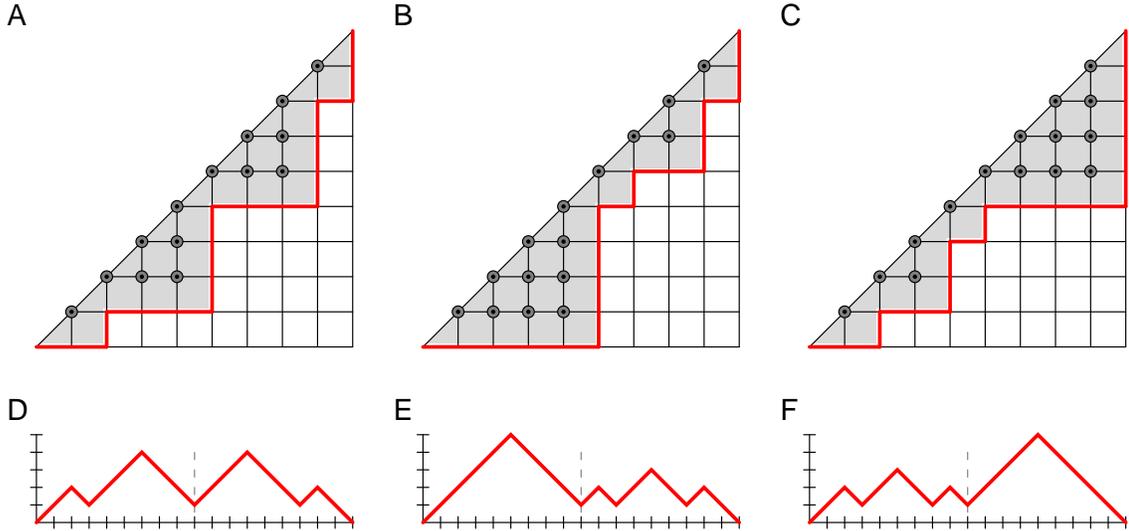}
\end{center}
\vskip -15.5cm
\caption{The correspondence between roadblock sets, monotonic paths that do not cross above the $y=x$ diagonal of an $(n-1) \times (n-1)$ square lattice, and Dyck paths of semi-length $n-1$. Given a roadblock set, the associated monotonic path is constructed by identifying for each $x$ coordinate from 0 to $n-1$ the lattice point of greatest $y$ coordinate, and then constructing the unique monotonic path through those points. Similarly, given a monotonic path, its roadblock set is obtained by placing roadblocks at each lattice point above and to the left of the path. (A) Roadblock set symmetric across the line $y=n-1-x$. (B) Roadblock set asymmetric across the line $y=n-1-x$. (C) Roadblock set asymmetric across the line $y=n-1-x$, obtained by reflecting the roadblocks in (B) over this line.  (D) Symmetric Dyck path associated with the roadblock set in (A). (E) Asymmetric Dyck path associated with the roadblock set in (B). (F) Asymmetric Dyck path associated with the roadblock set in (C), obtained by reversing the Dyck path in (E). The roadblock sets in (B) and (C) both generate 235 monotonic paths from $(0,0)$ to $(9,9)$.}
\label{figRoadblockSetSymmetric}
\end{figure}

Figure \ref{figRoadblockSetSymmetric} provides an illustration of Proposition \ref{propRoadblocks}, showing how the monotonic path associated with a roadblock set is constructed and vice versa. The monotonic path associated with a roadblock set can be viewed as the monotonic path that comes as close as possible to the roadblocks. The roadblock set for a monotonic path is the set of points above and to the left of the path.

The number of distinct caterpillar trees is $n!/2$, whereas the number of distinct roadblock sets is the smaller $C_{n-1}$. For a given caterpillar species tree, we can place the $n!/2$ caterpillar gene trees into equivalence classes, where two gene trees are said to be \emph{history-equivalent} if and only if they are associated with the same roadblock set. Two history-equivalent caterpillar trees $G_1$ and $G_2$ have the same set of roadblocks and the same set of monotonic paths, and hence, the same set of coalescent histories, up to permutation of the leaf labels. These equivalence classes were termed \emph{history classes} by \cite{RosenbergAndTao08}, so that two caterpillars with the same roadblocks are in the same history class.

By Proposition \ref{propRoadblocks}, for a fixed species tree, the number of history classes considering all caterpillar trees is $C_{n-1}$; this result accords with the computation of 5 history classes for $n=4$ \citep[][Table V]{Rosenberg02} and 14 for $n=5$ \cite[][Table 3]{RosenbergAndTao08}. We have also seen in Corollary \ref{coroStrictlyGreater} that $C_{n-1}$ is the largest possible number of coalescent histories for a pair of caterpillar trees. We now ask how many of the values $1,2, \ldots, C_{n-1}$ can be the number of coalescent histories for some caterpillar gene tree and species tree.
The simplest upper bound on this quantity is $C_{n-1}$. To improve on this bound, it is convenient to use the bijection between monotonic paths that do not cross the $y=x$ diagonal of the $(n-1) \times (n-1)$ lattice and \emph{Dyck paths of semilength $n-1$}~\cite[][Corollary 6.3.2]{Stanley99}. Each monotonic path represents a series of steps by $(1,0)$ or $(0,1)$ from $(0,0)$ to $(n-1,n-1)$, with $x \geqslant y$ at each step. Each Dyck path represents a series of steps by $(1,1)$ or $(1,-1)$ from $(0,0)$ to $(n-1,0)$, with $y \geqslant 0$ at each step. The coalescent histories for $(G,S)$ can therefore be associated with Dyck paths, where each up-step represents addition of a species in the species tree and each down-step represents a gene tree coalescence.

A Dyck path of semi-length $n-1$ has $2n-2$ total up-steps and down-steps. The steps of Dyck paths can be written as a sequence, with $U$ denoting up-steps and $D$ denoting down-steps. A Dyck path can be \emph{reversed} in the following manner: we take the sequence of $U$ and $D$ steps in the path, reverse the order of steps, and exchange the positions of $U$ and $D$ steps. Thus, a path $UUUDUDDUDD$ becomes $UUDUUDUDDD$. Reversing a Dyck path corresponds to traversing the path in reverse order. A reversed Dyck path is itself a Dyck path; if the sequence of $U$ and $D$ steps in a Dyck path is reversed, then $y \leqslant 0$ at each step; exchanging the positions of the $U$ and $D$ steps reflects the path over the $y=0$ axis.

\begin{lemma}
Consider a caterpillar species tree topology $S$ with $n$ leaves. Consider gene tree topologies $G_1$ and $G_2$ such that $(i,j)$ is in the roadblock set $B_{G_1,S}$ if and only if $(n-1-j,n-1-i)$ is in the roadblock set $B_{G_2,S}$. Then $(G_1,S)$ and $(G_2,S)$ have the same number of coalescent histories.
\label{lemSymmetry}
\end{lemma}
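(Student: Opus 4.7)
The plan is to exhibit an explicit bijection between the roadblocked monotonic paths associated with $(G_1, S)$ and those associated with $(G_2,S)$, induced by the lattice reflection $\phi\colon (i,j)\mapsto (n-1-j,\,n-1-i)$ across the anti-diagonal line $y = n-1-x$. By Proposition \ref{propBijection}, the coalescent histories for $(G_k,S)$ correspond bijectively to monotonic paths from $(0,0)$ to $(n-1,n-1)$ that remain weakly below the $y=x$ diagonal and avoid $B_{G_k,S}$, for $k=1,2$; a bijection between these two families of paths therefore suffices to establish the equality of the two coalescent history counts.

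First I would verify that $\phi$ is an involution on lattice points of the $(n-1)\times(n-1)$ grid, fixing the endpoints $(0,0)$ and $(n-1,n-1)$, and that applying $\phi$ to every vertex of a monotonic path and then reversing the resulting sequence yields a monotonic path from $(0,0)$ to $(n-1,n-1)$. The key observation is that a rightward unit step $(i,j)\to(i+1,j)$ has images $\phi(i,j) = (n-1-j,\,n-1-i)$ and $\phi(i+1,j) = (n-1-j,\,n-2-i)$; when these appear in the reversed traversal as $(n-1-j,\,n-2-i)\to(n-1-j,\,n-1-i)$, the result is an upward unit step. Symmetrically, upward unit steps are sent to rightward unit steps, so the construction is an involution on the set of monotonic paths. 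Moreover, the constraint of remaining weakly below $y=x$ is preserved, because $j\leqslant i$ implies $n-1-i\leqslant n-1-j$ for the image point.

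It then remains to observe that the roadblock condition transfers correctly. By the hypothesis of the lemma, $(i,j)\in B_{G_1,S}$ if and only if $\phi(i,j)\in B_{G_2,S}$, so a path avoids every roadblock in $B_{G_1,S}$ if and only if its $\phi$-image (with traversal reversed) avoids every roadblock in $B_{G_2,S}$. Combining these three facts, $\phi$ restricts to a bijection between the roadblocked monotonic paths for $(G_1,S)$ and those for $(G_2,S)$, and the two counts agree.

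The main bookkeeping obstacle is the first step, namely the verification that lattice reflection composed with order reversal genuinely produces a monotonic path with the same constraints. A conceptually cleaner alternative is to pass through the bijection with Dyck paths of semilength $n-1$ discussed immediately before the lemma: the geometric reflection $\phi$ corresponds precisely to the Dyck-path reversal operation described there (reverse the $U/D$ sequence and exchange $U$ with $D$), which is manifestly an involution on Dyck paths, making the required bijection transparent.
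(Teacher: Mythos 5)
Your proof is correct and is essentially the paper's argument: the paper implements exactly the same bijection by passing to the Dyck path associated with the monotonic path, reversing it, and checking that avoidance of a roadblock $(i,j)$ by the original path corresponds to avoidance of $(n-1-j,n-1-i)$ by the reversed path---which is precisely your anti-diagonal reflection $\phi$ composed with traversal reversal, the route you yourself flag as the ``conceptually cleaner alternative.'' One small wording slip: $\phi$ swaps $(0,0)$ and $(n-1,n-1)$ rather than fixing each individually, but since you then reverse the traversal order the path is correctly reoriented and nothing is lost.
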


\begin{proof}
We show that the coalescent histories for $(G_1,S)$ can be bijectively associated with the coalescent histories for $(G_2,S)$. Consider a coalescent history for $(G_1,S)$. Identify its associated monotonic path $M_1$ according to Proposition \ref{propBijection}, and identify the Dyck path $P_1$ associated with this monotonic path. Reverse $P_1$ to obtain $P_1^\prime$, and identify the monotonic path $M_1^\prime$ associated with $P_1^\prime$.

Because $M_1$ avoids each roadblock $(i,j)$ in $B_{G_1,S}$, after $i+j$ steps, $P_1$ cannot  have taken $i$ up-steps and $j$ down-steps. Because $P_1^\prime$ is the reverse of $P_1$, after $2n-2-i-j$ steps, $P_1^\prime$ cannot have taken $n-1-j$ up-steps and $n-1-i$ down-steps. The monotonic path $M_1^\prime$ therefore avoids the point $(n-1-j,n-1-i)$ for each roadblock $(i,j)$ in $B_{G_1,S}$. Hence, $M_1^\prime$ avoids each roadblock in $B_{G_2,S}$, and it therefore represents a coalescent history for $G_2$. Similarly, beginning from the coalescent history for $(G_2,S)$ associated with $M_1^\prime$, we find that $M_1$ represents a coalescent history for $B_{G_1,S}$.
\end{proof}

The lemma demonstrates that for two roadblock sets, if the roadblocks of one can be obtained by transforming each roadblock $(i,j)$ of one into a roadblock $(n-1-j,n-1-i)$ of the other, then the associated caterpillar gene trees have the same number of coalescent histories.

Consider a set of points $B$ on or below the $y=x$ diagonal of the first quadrant of the $(n-1) \times (n-1)$ lattice (and not on lines $y=0$ or $x=n-1$) with the property that if $(i,j) \in B$, then $(k,j) \in B$ for all $k$ with $j \leqslant k \leqslant i$ and $(i,\ell) \in B$ for all $\ell$ with $j \leqslant \ell \leqslant i$. By Proposition \ref{propRoadblocks}, given a caterpillar species tree topology, $B$ is the roadblock set for some caterpillar gene tree. We term such a set a \emph{caterpillar-friendly} roadblock set.

\begin{defi}
Consider a caterpillar-friendly roadblock set $B$ for the $(n-1) \times (n-1)$ lattice. We say that $B$ is \emph{symmetric} if for each $(i,j) \in B$, $(n-1-j,n-1-i)$ is also in $B$. Otherwise, $B$ is \emph{asymmetric}.
\label{defiSymmetric}
\end{defi}

\noindent In a symmetric caterpillar-friendly roadblock set, when the points in the roadblock set are reflected across the line $y=n-1-x$, the same roadblock set is obtained (Figure \ref{figRoadblockSetSymmetric}A). For an asymmetric caterpillar-friendly roadblock set, a different roadblock set is obtained by this reflection (Figure \ref{figRoadblockSetSymmetric}B and \ref{figRoadblockSetSymmetric}C).

For the $(n-1) \times (n-1)$ lattice, denote by $Q_{n-1}$ and $R_{n-1}$ the numbers of symmetric and asymmetric caterpillar-friendly roadblock sets, respectively. By Lemma \ref{lemSymmetry}, the asymmetric caterpillar-friendly roadblock sets can be partitioned into disjoint pairs such that the associated caterpillar gene trees for the two entries in a pair give rise to the same number of coalescent histories. Hence, considering all caterpillar gene trees and species trees, the number of distinct values possible for the number of coalescent histories is bounded above by $Q_{n-1}+R_{n-1}/2$, or because $Q_{n-1}+R_{n-1}=C_{n-1}$, by $(C_{n-1}+Q_{n-1})/2$.

We obtain $Q_{n-1}$ by counting all ways of placing roadblocks $(i,j)$ with $i+j \leqslant n-1$. By symmetry we then assign points $(n-1-j,n-1-i)$ to be roadblocks as well. Because of the bijection between roadblock sets and monotonic paths (Proposition \ref{propRoadblocks}), each set of roadblocks $(i,j)$ with $i+j \leqslant n-1$ is bijectively associated with a monotonic path from $(0,0)$ to a point $(i,n-1-i)$ for some $i$ with $0 \leqslant i \leqslant n-1$.

\begin{lemma}
The value of $Q_{n-1}$ is ${n-1 \choose \lfloor (n-1)/2 \rfloor}$.
\label{lemSymmetric}
\end{lemma}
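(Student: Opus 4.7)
The plan is to translate symmetric caterpillar-friendly roadblock sets into lattice paths of length $n-1$ and then invoke a classical enumeration. By Proposition \ref{propRoadblocks}, caterpillar-friendly roadblock sets on the $(n-1)\times(n-1)$ lattice correspond bijectively to monotonic paths $P$ from $(0,0)$ to $(n-1,n-1)$ that stay weakly below $y=x$, with the roadblock set being the set of lattice points above and to the left of $P$ inside the triangle $j \leqslant i$. Under this correspondence, the coordinate symmetry $B = \rho(B)$ from Definition \ref{defiSymmetric}, where $\rho\colon (i,j)\mapsto(n-1-j,n-1-i)$ denotes reflection across the anti-diagonal $i+j=n-1$, is equivalent to $\rho$-invariance of the associated path $P$.

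Next, I would observe that a $\rho$-invariant monotonic path $P$ from $(0,0)$ to $(n-1,n-1)$ is determined uniquely by its first $n-1$ steps, which trace a monotonic path $P^\prime$ from $(0,0)$ to some endpoint $(k, n-1-k)$ on the anti-diagonal; the remaining $n-1$ steps are forced by reflecting $P^\prime$ through $\rho$ (concretely, by reading the steps of $P^\prime$ in reverse order and interchanging $(1,0)$ with $(0,1)$). Conversely, any monotonic path of length $n-1$ from $(0,0)$ staying weakly below $y=x$ extends by reflection to a $\rho$-invariant full path from $(0,0)$ to $(n-1,n-1)$ with the same property. Thus symmetric caterpillar-friendly roadblock sets are in bijection with monotonic paths of length $n-1$ starting at the origin and staying weakly below $y=x$.

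Encoding each rightward step as $+1$ and each upward step as $-1$, such half-paths correspond to $\pm 1$ walks of length $n-1$ from $0$ that remain weakly nonnegative, the number of which is the classical central quantity $\binom{n-1}{\lfloor (n-1)/2\rfloor}$ (a standard consequence of the reflection principle). This yields $Q_{n-1} = \binom{n-1}{\lfloor (n-1)/2\rfloor}$. The main subtlety that I anticipate is the first step, namely confirming that the coordinate-wise definition of symmetry in Definition \ref{defiSymmetric} translates exactly into $\rho$-invariance of the boundary path supplied by Proposition \ref{propRoadblocks}; once this geometric identification is established, the reduction to half-length paths and the invocation of the central binomial identity are routine.
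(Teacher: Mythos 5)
Your proposal is correct and follows essentially the same route as the paper: both arguments reduce the count of symmetric roadblock sets to the count of monotonic half-paths of length $n-1$ that start at the origin, stay weakly below $y=x$, and terminate on the anti-diagonal $i+j=n-1$, with the second half of a symmetric path forced by reflection. The only difference is the final evaluation of that count --- the paper sums the Catalan's-triangle entries $D(i,n-1-i)$ along the anti-diagonal and simplifies the resulting telescoping binomial sums, whereas you recode the half-paths as weakly nonnegative $\pm 1$ walks and invoke the classical value ${n-1 \choose \lfloor (n-1)/2 \rfloor}$; these are equivalent computations of the same quantity.
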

\begin{proof}
Using eq.~\ref{eqCatalansTriangle}, the number of monotonic paths from $(0,0)$ to $(i,n-1-i)$ for some $i$ with $0 \leqslant i \leqslant n-1$ is obtained by the sum
\begin{equation*}
\sum_{j=0}^{\lfloor (n-1)/2 \rfloor} {n-1 \choose j} - \sum_{j=1}^{\lfloor (n-1)/2 \rfloor} {n-1 \choose j-1}.
\end{equation*}
The first sum gives $2^{n-2} + \frac{1}{2}{n-1 \choose (n-1)/2}$ for odd $n$, and $2^{n-2}$ for even $n$. The second sum gives $2^{n-2} - \frac{1}{2}{n-1 \choose (n-1)/2}$ for odd $n$, and $2^{n-2} - {n-1 \choose \lfloor (n-1)/2 \rfloor}$ for even $n$. Combining these cases, the result follows.
\end{proof}

This result appeared in Bonin {\it et al.}~(2003, Theorem 2.5) \nocite{BoninEtAl03} as the number of number of distinct first halves for Dyck paths, and in Deng {\it et al.}~(2015, Theorem 4.2) \nocite{DengEtAl15} as the number of Dyck paths invariant under reversal.

\begin{prop}
The size of the set of values that can equal the number of coalescent histories for at least one pair $(G,S)$ consisting of an $n$-leaf caterpillar gene tree $G$ and an $n$-leaf caterpillar species tree $S$ is bounded above by
$T_{n-1}=(C_{n-1}+Q_{n-1})/2$, or
$$\frac{1}{2} \bigg[ \frac{{2n-2 \choose n-1}}{n} + {n-1 \choose \lfloor (n-1)/2 \rfloor} \bigg] .$$
\label{propSymmetricallyDistinct}
\end{prop}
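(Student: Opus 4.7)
The plan is to combine Proposition \ref{propRoadblocks}, Lemma \ref{lemSymmetry}, and Lemma \ref{lemSymmetric} in a straightforward counting argument, with the only non-obvious step being a verification that the reflection $(i,j) \mapsto (n-1-j, n-1-i)$ restricts to an involution on the set of caterpillar-friendly roadblock sets.

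First I would observe that by Proposition \ref{propBijection}, the number of coalescent histories for a pair $(G,S)$ depends on $(G,S)$ only through the roadblock set $B_{G,S}$. By Proposition \ref{propRoadblocks}, as $G$ ranges over all $n!/2$ caterpillar gene trees (with $S$ fixed), the roadblock sets $B_{G,S}$ range exactly over the $C_{n-1}$ caterpillar-friendly roadblock sets on the $(n-1)\times(n-1)$ lattice. Hence an upper bound on the number of distinct values taken by $|A_{n-1,n-1}|$ over all pairs $(G,S)$ is obtained by an upper bound on the number of distinct values assigned to the $C_{n-1}$ caterpillar-friendly roadblock sets.

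Next I would define the reflection $\rho(i,j)=(n-1-j,n-1-i)$ and verify three properties: (i) $\rho$ is an involution on lattice points on or below the diagonal $y=x$; (ii) if $B$ is caterpillar-friendly, then $\rho(B) = \{\rho(i,j) : (i,j) \in B\}$ is also caterpillar-friendly (the row/column closure property of Remark \ref{remBGS} is preserved because $\rho$ swaps the roles of rows and columns); and (iii) the fixed points of $\rho$ acting on caterpillar-friendly roadblock sets are precisely the symmetric sets of Definition \ref{defiSymmetric}. Item (ii) is the only substantive check, and it follows from Remark \ref{remBGS} after a direct verification that the two closure conditions are interchanged under $\rho$.

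Given these facts, $\rho$ partitions the $C_{n-1}$ caterpillar-friendly roadblock sets into $Q_{n-1}$ singleton orbits (the symmetric sets) and $R_{n-1}/2$ orbits of size $2$ (the asymmetric sets paired with their reflections), where $R_{n-1} = C_{n-1} - Q_{n-1}$. By Lemma \ref{lemSymmetry}, both members of each orbit of size $2$ yield the same number of coalescent histories. Therefore the number of distinct values taken over all caterpillar-friendly roadblock sets is at most
\begin{equation*}
Q_{n-1} + \frac{R_{n-1}}{2} = Q_{n-1} + \frac{C_{n-1}-Q_{n-1}}{2} = \frac{C_{n-1}+Q_{n-1}}{2}.
\end{equation*}
Substituting $C_{n-1} = \binom{2n-2}{n-1}/n$ and the value $Q_{n-1} = \binom{n-1}{\lfloor (n-1)/2 \rfloor}$ from Lemma \ref{lemSymmetric} yields the stated closed form. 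The main (and essentially only) obstacle is the verification that $\rho$ preserves caterpillar-friendliness; everything else is bookkeeping.
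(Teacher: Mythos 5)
Your proposal is correct and follows essentially the same route as the paper: the paper also counts the $C_{n-1}$ caterpillar-friendly roadblock sets via Proposition \ref{propRoadblocks}, pairs the asymmetric ones with their reflections using Lemma \ref{lemSymmetry}, and bounds the number of distinct values by $Q_{n-1}+R_{n-1}/2=(C_{n-1}+Q_{n-1})/2$ with $Q_{n-1}$ from Lemma \ref{lemSymmetric}. Your explicit check that the reflection $(i,j)\mapsto(n-1-j,n-1-i)$ is an involution preserving caterpillar-friendliness is a point the paper leaves implicit, and it is a worthwhile addition rather than a deviation.
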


\noindent This quantity, which appeared in a bijectively related context in Bonin {\it et al.}~(2003, Theorem 4.2), gives the number of distinct Dyck paths up to reversal. Numerical values of the formulas in Lemma \ref{lemSymmetric} and Proposition \ref{propSymmetricallyDistinct} are shown in Table \ref{tableSequences}.

\begin{table}[tb]
\begin{center}
{\fontsize{8}{13}\selectfont \caption{The number of distinct values possible for the number of coalescent histories of a caterpillar gene tree and a caterpillar species tree.\label{tableSequences}}
\begin{tabular}{p{0.7in}p{0.6in}p{1in}p{1in}p{1.1in}p{1.1in}}
\hline\hline
Number of leaves $n$ &
Number of distinct roadblock sets &
Number of roadblock sets associated with symmetric Dyck paths &
Number of roadblock sets associated with asymmetric Dyck paths &
Upper bound on the number of distinct values for the number of coalescent histories &
Exact number of distinct values for the number of coalescent histories \\[0.5ex]
Notation &
$C_{n-1}$ &
$Q_{n-1}$ &
$P_{n-1}$ &
$T_{n-1}$ & \\[0.5ex]
Formula & $\frac{{2n-2 \choose n-1}}{n}$ & ${n-1 \choose \lfloor (n-1)/2 \rfloor}$
& $C_{n-1}-Q_{n-1}$ & $(C_{n-1}+Q_{n-1})/2$ & \\[0.5ex]
OEIS record & A000108 & A001405 & & A007123 & \\[0.5ex] \hline
2 & 1        & 1    & 0       & 1       & 1 \\
3 & 2        & 2    & 0       & 2       & 2 \\
4 & 5        & 3    & 2       & 4       & 4 \\
5 & 14       & 6    & 8       & 10      & 10 \\
6 & 42       & 10   & 32      & 26      & 21 \\
7 & 132      & 20   & 112     & 76      & 56 \\
8 & 429      & 35   & 394     & 232     & 154 \\
9 & 1430     & 70   & 1360    & 750     & 440 \\
10 & 4862    & 126  & 4736    & 2494    & 1373 \\
11 & 16796   & 252  & 16544   & 8524    & 4310 \\
12 & 58786   & 462  & 58324   & 29624   & 13925 \\
\hline
\hline
\end{tabular}}
\end{center}
\end{table}

\section{Non-recursive enumeration of coalescent histories}
\label{secNonrecursive}

\noindent With the correspondence between coalescent histories for non-matching caterpillars and roadblocked monotonic paths established, we now turn to enumerating the coalescent histories of possibly non-matching caterpillar gene trees and species trees. We can do so recursively by enumerating roadblocked monotonic paths according to Proposition \ref{propBijection}; we can also obtain a non-recursive formula by applying eq.~\ref{eqRecursion}.

Without loss of generality, considering the two subtrees immediately descended from the root of a tree, we treat the left subtree as having a number of leaves greater than or equal to that of the right subtree. The right subtree of a caterpillar tree then has a single leaf, so that in eq.~\ref{eqRecursion}, the right subtree $G_R$ always has exactly one leaf in each successive step of the recursion. Hence, the term $B_{G_R, T(G_R,S),k+d(G_R,S)}$, follows the base case of the recursion and is equal to 1. Eq.~\ref{eqRecursion}, describing the number of coalescent histories for a caterpillar gene tree $G$ and a species tree $S$, then reduces to
\begin{equation}
\label{eqRecursionCaterpillar}
B_{G,S,m} = \sum_{k=1}^m B_{G_L, T(G_L,S),k+d(G_L,S)},
\end{equation}
with initial condition $B_{G,S,m}=1$ for all $m$ when $G$ has a single leaf.

If $S$ is also a caterpillar tree with $n$ leaves, then we can iterate the recursion $n-1$ times, at each step reducing the size of the left subtree $G_L$ by one, until $G_L$ has a single leaf, the base case applies, and the summand equals 1. Each iteration introduces a new summation, with its upper limit depending on the associated $d(G_L,S)$, the number of edges that separate the root of $T(G_L,S)$ from the root of $S$. Continuing to label internal nodes of $G$ from 1 to $n-1$ in increasing order from the cherry to the root, we associate internal node $j$ of $G$ with index $k_{n-j}$. Setting the integer parameter $m$ equal to 1, we have
\begin{equation}
\label{eqRecursionCaterpillar2}
B_{G,S,1} = \sum_{k_1=1}^{1} \sum_{k_2=1}^{k_1+c_1} \sum_{k_3=1}^{k_2+c_2} \dots \sum_{k_{n-1}=1}^{k_{n-2}+c_{n-2}} 1,
\end{equation}
where the constant $c_j$ represents the number of additional edges of $S$ that are possible locations for gene tree coalescence $j$ but that are not possible for gene tree coalescence $j+1$.

For $1 \leqslant j \leqslant n-1$, consider gene tree internal node $j$. Let $L_j$ be the set of labels for all $j+1$ leaves descended from $j$. Following the definitions in eq.~\ref{eqRecursion}, let $T_j(G,S)$ denote the smallest subtree of $S$ that has the property that each label in $L_j$ labels one of its leaves, and let $d_j$ denote the number of edges separating the root of $T_j(G,S)$ from the root of $S$. Then $d_j+1$ gives the number of edges of $S$ on which gene tree coalescence $j$ can occur (the +1 represents the root edge of $S$). The quantity $u_j=n-1-j-d_j$, equal to the number of edges of $S$ ancestral to at least $j+1$ leaves (or $n-j$) but on which gene tree coalescence $j$ \emph{cannot} occur, represents the number of roadblocks $(i,j)$ with fixed $j$ and $i \geqslant j$.

For $j=1, 2, \ldots, n-2$, the desired quantity $c_j$, the number of additional edges of $S$ available for coalescence $j$ but not for coalescence $j+1$, equals $c_j=d_j-d_{j+1}$. We have therefore shown the following proposition.

\begin{prop}
Consider a caterpillar gene tree $G$ and a caterpillar species tree $S$, both bijectively associated with the same set of $n$ leaf labels, but that do not necessarily match. The number of coalescent histories for $(G,S)$ is obtained by eq.~\ref{eqRecursionCaterpillar2}, where the vector $(c_1,c_2,\ldots,c_{n-2})$ is obtained as a function $\mathbf{c}(G,S)$ that depends only on the topologies of $G$ and $S$.
\label{propNonrecursive}
\end{prop}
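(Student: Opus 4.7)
The plan is to derive the nested-sum formula of eq.~\ref{eqRecursionCaterpillar2} from the general recursion eq.~\ref{eqRecursion} by iterating that recursion exactly $n-1$ times, using the caterpillar structure of $G$ at every step to collapse each application to a single summation. The starting observation, already recorded in the excerpt, is that whenever the current gene tree is a caterpillar, its right subtree has a single leaf, so the factor $B_{G_R,T(G_R,S),k+d(G_R,S)}$ in eq.~\ref{eqRecursion} equals 1 by the base case of the recursion. Thus eq.~\ref{eqRecursion} specializes to the single-sum identity eq.~\ref{eqRecursionCaterpillar}.

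First I would apply this reduction with $G$ equal to the full $n$-leaf caterpillar $G_{n-1}$, species tree $S$, and $m=1$, producing a sum over a variable $k_1$ with upper limit $1$ whose summand is $B_{G_{n-2},T(G_{n-2},S),k_1+d_{n-2}}$. Because $G_{n-2}$ is again a caterpillar, I would apply the same reduction to the summand with gene tree $G_{n-2}$, species tree $T(G_{n-2},S)$, and parameter $k_1+d_{n-2}$, introducing a new summation variable $k_2$ whose upper limit is $k_1+d_{n-2}$. Continuing this way, at each iteration the left subtree loses one leaf, so after $n-1$ iterations the innermost left subtree has only a single leaf and the innermost summand reduces to 1, yielding a chain of $n-1$ nested summations of the form asserted in the statement.

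The central bookkeeping step, and the place where the main difficulty lies, is to show that at each stage the relevant minimal displayed subtree and $d$-value can be rewritten in terms of quantities computed directly on the original $S$. Concretely, for each $j$ one must verify the two identities $T(G_j,T(G_{j+1},S))=T(G_j,S)$ and $d(G_j,T(G_{j+1},S))=d_j-d_{j+1}$. The first holds because the leaf set $L_j$ is contained in $L_{j+1}$ and therefore lies inside $T(G_{j+1},S)$, so taking the MRCA either in $T(G_{j+1},S)$ or directly in $S$ returns the same node. The second holds because $T(G_j,S)$ is a subtree of $T(G_{j+1},S)$, both subtrees sit on the same root-to-root path in $S$, and the root of $T(G_j,S)$ lies $d_j$ edges below the root of $S$ while the root of $T(G_{j+1},S)$ lies $d_{j+1}$ edges below it. With these identities, the upper limit introduced at each iteration takes exactly the form $k_{\text{previous}} + (d_j-d_{j+1})$, matching the $c_j$ defined in the proposition.

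Finally, because each $d_j$ is determined purely by the labeled topologies of $G$ and $S$, as the number of edges in $S$ between the MRCA of the label set $L_j$ and the root of $S$, each $c_j=d_j-d_{j+1}$ depends only on these topologies as well, so the vector $\mathbf{c}(G,S)=(c_1,c_2,\ldots,c_{n-2})$ is a well-defined function of the two topologies. The step I expect to be the main obstacle is the subtree identity $T(G_j,T(G_{j+1},S))=T(G_j,S)$ together with the accompanying computation of $d(G_j,T(G_{j+1},S))$, because these identifications are what allow the recursion to be carried out self-consistently and the final bounds to be expressed in terms of $d$-values computed directly on $S$ rather than on a shrinking sequence of restricted host trees.
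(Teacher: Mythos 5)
Your proposal is correct and follows essentially the same route as the paper: the paper also specializes eq.~\ref{eqRecursion} to the single-sum form eq.~\ref{eqRecursionCaterpillar} via the one-leaf right subtree, iterates $n-1$ times, and identifies each upper limit with $c_j = d_j - d_{j+1}$. Your explicit verification of the telescoping identities $T(G_j,T(G_{j+1},S))=T(G_j,S)$ and $d(G_j,T(G_{j+1},S))=d_j-d_{j+1}$ is a useful detail that the paper leaves implicit, but it does not change the argument.
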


Note that if $G$ and $S$ match, then for each $j$ from 1 to $n-1$, $G_j=T_j(G,S)$, and hence $d_j=n-1-j$, $u_j=0$, and no roadblocks occur. We have $c_j=1$ for each $j$ from 1 to $n-2$, and eq.~\ref{eqRecursionCaterpillar2} becomes
$$\sum_{k_1=1}^{1} \sum_{k_2=1}^{k_1+1} \sum_{k_3=1}^{k_2+1} \dots \sum_{k_{n-1}=1}^{k_{n-2}+1} 1,$$
equal to the Catalan number $C_{n-1}$ \citep[][Theorem 3.4]{Rosenberg07:jcb}.

We take as an example the gene tree and species tree in Figure \ref{figLatticeNonmatching}. We report the values of the $u_j$, $d_j$ and $c_j$ in Table \ref{tableQuantities}. The number of coalescent histories is
$$\sum_{k_1=1}^{1} \sum_{k_2=1}^{k_1+0} \sum_{k_3=1}^{k_2+1} \sum_{k_4=1}^{k_3+0}
\sum_{k_5=1}^{k_4+2} \sum_{k_6=1}^{k_5+1} \sum_{k_7=1}^{k_6+0} \sum_{k_8=1}^{k_7+0} \sum_{k_9=1}^{k_8+0} 1 = 235.$$
We can also obtain this result by recursive summation of roadblocked monotonic paths (Figure \ref{figLatticeNonmatching}).

\begin{table}[tb]
\begin{center}
{\fontsize{8}{13}\selectfont \caption{Quantities associated with the enumeration of coalescent histories for a caterpillar gene tree $(((((((((A,F),B),C),D),G),I),H),J),E)$ and species tree $(((((((((A,B),C),D),E),F),G),H),I),J)$.\label{tableQuantities}}
\begin{tabular}{cccccccccc}
\hline\hline
Internal node index in gene tree $G$ ($j$)    & 9 & 8 & 7 & 6 & 5 & 4 & 3 & 2 & 1 \\
\hline
Summation index ($n-j$)      & 1 & 2 & 3 & 4 & 5 & 6 & 7 & 8 & 9 \\
Number of roadblocks ($u_j$) & 0 & 1 & 1 & 2 & 1 & 1 & 2 & 3 & 4 \\
Distance between root of $T_j(G,S)$ and root of $S$ ($d_j$)  & 0 & 0 & 1 & 1 & 3 & 4 & 4 & 4 & 4 \\
Nodes possible for coalescence $j$ but not $j+1$ ($c_j=d_j-d_{j+1}$) & NA & 0 & 1 & 0 & 2 & 1 & 0 & 0 & 0 \\
Summation term & $\sum\limits_{k_1=1}^{1}$ & $\sum\limits_{k_2=1}^{k_1+0}$ & $\sum\limits_{k_3=1}^{k_2+1}$ & $\sum\limits_{k_4=1}^{k_3+0}$ & $\sum\limits_{k_5=1}^{k_4+2}$ & $\sum\limits_{k_6=1}^{k_5+1}$ &
$\sum\limits_{k_7=1}^{k_6+0}$ & $\sum\limits_{k_8=1}^{k_7+0}$ & $\sum\limits_{k_9=1}^{k_8+0}$ \\
\hline
\hline
\end{tabular}}
\end{center}
\end{table}

By exhaustive use of Proposition \ref{propNonrecursive}, we have evaluated all possible values of the number of coalescent histories for the $n!/2$ caterpillar gene tree topologies with $n$ leaves. This exhaustive computation applies eq.~\ref{eqRecursionCaterpillar2} with all possible vectors $(c_1,c_2,\ldots,c_{n-2})$ that correspond to gene trees---in other words, the $C_{n-1}$ vectors with $0 \leqslant \sum_{k=1}^j c_j \leqslant j$ for each $j$ from 1 to $n-2$~\cite[][replacing $a_i$ in item 81 with $1-c_i$]{Stanley15}.

The upper bound from Proposition \ref{propSymmetricallyDistinct} on the number of distinct values for the number of coalescent histories is relatively tight for small $n$, but already is more than double the exact computation for $n=12$ (Table \ref{tableSequences}). The smallest case in which the number of distinct values (21) differs from the upper bound (26) occurs with $n=6$ leaves, in which 1, 2, 3, 4, 5, 6, 7, 9, 10, 12, 13, 14, 16, 17, 19, 22, 23, 26, 28, 32, and 42 are achievable values for the number of coalescent histories. Values 5, 9, 10, 14, and 19 are each achieved with two distinct set of roadblocks that are not equivalent when reversing their associated Dyck paths.

Because $Q_{n-1} \ll C_{n-1}$, the upper bound for the number of distinct values for the number of coalescent histories of a caterpillar pair is asymptotically equivalent to $C_{n-1}/2$, half the maximum number of coalescent histories for caterpillars. Thus, although the number of caterpillars $n!/2$ grows much faster than the maximal number of coalescent histories $C_{n-1}$, asymptotically only at most half the values in the range of possible values for the number of coalescent histories are achieved by actual caterpillar gene trees.

\section{Special families of caterpillar gene trees and species trees}
\label{secSpecial}

\noindent From Propositions \ref{propBijection} and \ref{propNonrecursive}, we can obtain a variety of corollaries that describe the number of coalescent histories for special pairs of non-matching caterpillar trees. For certain classes of pairs, the number of coalescent histories can be obtained in closed form.

\subsection{Nearest-neighbor-interchange}
\label{secNNI}

\noindent For a fixed caterpillar species tree $S$, we first consider caterpillar gene trees $G$ that differ from $S$ by a single nearest-neighbor-interchange move (NNI). We have the following result.
\begin{prop}
Consider a caterpillar species tree topology $S$ with $n$ leaves and a caterpillar gene tree topology $G$ that differs from $S$ by an NNI move. Then
(i) the roadblock set $B_{G,S}$ consists of a single point $(i,i)$ on the diagonal of the square lattice, for some $i$ with $1 \leqslant i \leqslant n-2$.
(ii) The number of coalescent histories for $(G,S)$ is $C_{n-1}-C_{i} C_{n-1-i}$.
\label{propNNI}
\end{prop}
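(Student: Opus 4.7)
The plan is to split the proposition into its two claims and handle (i) by an explicit computation of the function $f$ and (ii) by a standard Catalan factorization combined with Proposition \ref{propBijection}.

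For part (i), I would begin by unpacking Definition \ref{defiNNI}. Up to the cherry-swap equivalence on canonical vectors, every NNI move either exchanges the labels at positions $k$ and $k+1$ for some $k\geqslant 3$, or exchanges the label at position $1$ (equivalently $2$) with the label at position $3$. In each case I would write the permutation $\pi(G)$ explicitly and then evaluate $f(g_\ell)=\max_{m\leqslant\ell}\pi_m(G)-1$ for every $\ell$, applying the special rule $f(g_1)=f(g_2)=\max_{m\in\{1,2\}}\pi_m(G)-1$. The expected outcome is that $f(g_\ell)$ coincides with its matching value $\ell-1$ at every index outside the swapped pair and is bumped up by exactly one at each of the two indices involved in the swap. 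Tracing this through the roadblock condition $i<f(g_{j+1})$ with $j\leqslant i\leqslant n-1$ forces exactly one lattice point to be a roadblock, and by construction that point lies on the diagonal: it is $(1,1)$ whenever the swap involves position $3$ and the cherry, and $(k-1,k-1)$ when the swap is between positions $k$ and $k+1$ for $k\geqslant 3$. This covers all values $1\leqslant i\leqslant n-2$.

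For part (ii), I would invoke Proposition \ref{propBijection}: the number of coalescent histories for $(G,S)$ equals the number of monotonic paths from $(0,0)$ to $(n-1,n-1)$ that stay weakly below $y=x$ and avoid the single lattice point $(i,i)$. By inclusion--exclusion this equals $C_{n-1}$ minus the number of such paths that pass through $(i,i)$. A path through $(i,i)$ factors uniquely into its segment from $(0,0)$ to $(i,i)$ and its segment from $(i,i)$ to $(n-1,n-1)$, both of which must stay weakly below $y=x$; Lemma \ref{lemBijection} counts the first segment as $C_i$, and after translating $(i,i)$ to the origin the second is counted as $C_{n-1-i}$. The desired total is therefore $C_{n-1}-C_iC_{n-1-i}$.

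The main obstacle is the bookkeeping in part (i) around the cherry, since positions $1$ and $2$ are treated jointly by the rule for $f$ and are interchangeable in the canonical ordering. One has to verify that the two nominally different NNI moves involving position $3$ (swapping $1$ with $3$ versus swapping $2$ with $3$) yield the same single roadblock $(1,1)$, so that the $i=1$ case is consistently defined; the remaining cases are a uniform calculation once $\pi(G)$ is written down.
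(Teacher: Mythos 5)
Your overall strategy is exactly the paper's: compute $f$ explicitly from the permutation $\pi(G)$ to locate the roadblocks, then count the monotonic paths through a single diagonal roadblock by the Catalan factorization $C_i C_{n-1-i}$. Part (ii) is correct as written and coincides with the paper's argument, and your flagged concern about the two cherry-adjacent swaps both yielding the roadblock $(1,1)$ is the right thing to check.

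The problem is your predicted outcome for part (i). You assert that $f(g_\ell)$ ``is bumped up by exactly one at each of the two indices involved in the swap.'' That is false, and if it were true it would contradict the conclusion you are trying to reach. For a swap of positions $k$ and $k+1$ (so $\pi_k(G)=k+1$, $\pi_{k+1}(G)=k$, identity elsewhere), one gets $f(g_k)=\max_{m\leqslant k}\pi_m(G)-1=(k+1)-1=k$, which is indeed one more than the matching value $k-1$; but $f(g_{k+1})=\max_{m\leqslant k+1}\pi_m(G)-1=(k+1)-1=k$, which \emph{equals} the matching value. Only the smaller index is bumped. Had $f(g_{k+1})$ also risen to $k+1$, the roadblock condition $j\leqslant i<f(g_{j+1})$ applied at $j=k$ would force a second roadblock at $(k,k)$ in addition to $(k-1,k-1)$, and part (i) would fail. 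The analogous issue arises in the cherry case, where $f(g_1)=f(g_2)=2$ is bumped but $f(g_3)=2$ is not. The single-roadblock conclusion is still correct, but only because the actual values of $f$ disagree with the ones you describe; the fix is simply to carry out the $\max$ computation rather than relying on the stated expected outcome. (A small additional slip: the matching value of $f(g_1)$ is $1$, not $\ell-1=0$.)
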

\begin{proof}
We use the bijection between coalescent histories and roadblocked monotonic paths (Proposition \ref{propBijection}). We label leaves on the trees from $1$ to $n$ as in Section \ref{secNonmatching}, using the permutation $\pi$ to map the leaves of $G$ to the leaves of $S$. By Definition \ref{defiNNI}, an NNI move exchanges a single pair of leaves labeled $k$ and $k+1$ in $G$ for some $k \in \{2,3,\ldots,n-1\}$, or it exchanges leaves 1 and 3. Let $k_s$ be the smaller of the two labels for the leaves participating in the NNI move, and let $k_\ell$ be the larger of the two labels. We then have $\pi_{k_s}(G)=k_s+1$ and $\pi_{k_s+1}(G)=k_s$ if $k_s \in \{2,3,\ldots,n-1\}$, and $\pi_1(G)=3$ and $\pi_3(G)=1$ if $k_s=1$.

(i) Following Section \ref{secNonmatching}, for $(G,S)$ differing by one NNI move, the minimal internal edge of $S$ ancestral to leaf $k_s$ of $G$ is $f(g_{k_s})=k_s$ if $2 \leqslant k_s \leqslant n-1$, and $f(g_1)=2$ if $k_s=1$. The roadblocks $(i,j)$ in the square lattice are those points that satisfy $i < f(g_{j+1})$. By construction, $(k_s-1,k_s-1)$ is the only roadblock if $2 \leqslant k_s \leqslant n-1$, and $(1,1)$ is the only roadblock if $k_s=1$.

(ii) The number of coalescent histories for $(G,S)$ is the number of coalescent histories for the case of no roadblocks, or $C_{n-1}$ (Lemma \ref{lemBijection}), minus the number of monotonic paths from $(0,0)$ to $(n-1,n-1)$ that do not cross the diagonal and that pass through the roadblock. For a roadblock at $(i,i)$, this latter quantity is $C_i C_{n-1-i}$, multiplying the number of monotonic paths $C_i$ from $(0,0)$ to $(i,i)$ that do not cross the diagonal by the number of monotonic paths from $(i,i)$ to $(n-1,n-1)$ that do not cross the diagonal.
\end{proof}

Figure \ref{figNNI}A illustrates the result of Proposition \ref{propNNI} with a pair $(G,S)$ that differ by a single NNI move. The number of coalescent histories for the example is 4274, as obtained by Proposition \ref{propBijection}. Using Proposition \ref{propNNI}, we see that  $4274=C_9-C_4 C_5 = 4862 - 14 \times 42$.

Extending Proposition \ref{propNNI}, we can establish a formula for the number of coalescent histories when the permutation $\pi$ performs $k$ NNI moves, $1 \leqslant k \leqslant \lfloor \frac{n-1}{2} \rfloor$, in such a way that $\pi$ consists of disjoint cycles of length 2. The roadblock set for $(G,S)$ with such a permutation contains $k$ points on the diagonal of the $(n-1) \times (n-1)$ lattice. For this roadblock set, we count the monotonic paths that do not cross the diagonal by use of the inclusion--exclusion principle.

\begin{prop}
Consider a caterpillar species tree topology $S$ with $n$ leaves and a caterpillar gene tree topology $G$ that differs from $S$ by $k$ disjoint NNI moves. Then
(i) the roadblock set $B_{G,S}$ consists of $k$ distinct points $(i_j,i_j)$ on the diagonal of the square lattice, $j \in \{1,2,\ldots,k\}$, with $0 < i_1 < \ldots i_k < n-1$.
(ii) The number of coalescent histories for $(G,S)$ can be written
\begin{equation}
\label{eqInclusionExclusion}
C_{n-1}
+ \sum_{\ell=1}^{k} (-1)^{\ell}
\left(\sum_{ (j_1, j_2, \ldots, j_\ell) \in \{1,2,\ldots,k \}^\ell  \atop 0< i_{j_{1}}<...<i_{j_{\ell}} < n-1}
C_{i_{j_1}} C_{i_{j_2}-i_{j_1}} \cdots C_{i_{j_\ell}-i_{j_\ell -1 }} C_{n - 1 - i_{j_\ell}}\right).
\end{equation}
\label{propInclusion}
\end{prop}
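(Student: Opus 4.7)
The plan is to leverage Proposition \ref{propNNI} for each of the $k$ individual NNI moves and then combine the resulting constraints via inclusion--exclusion over the roadblock set.

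For part (i), because the $k$ disjoint NNI moves act on pairwise disjoint pairs of leaf labels, the permutation $\pi$ associated with $(G,S)$ is a product of $k$ disjoint transpositions, each either a transposition of consecutive labels $(k_s^{(j)}, k_s^{(j)}+1)$ with $k_s^{(j)} \geq 2$ or the transposition of the labels $(1,3)$. The maxima appearing in the definition of $f(g_{j+1})$ from Section \ref{secNonmatching} therefore decompose so that each transposition contributes independently, matching the analysis of Proposition \ref{propNNI}(i). For indices $j+1$ not touched by any swap, $f(g_{j+1}) = j$ exactly as in the matching case, producing no roadblock; for indices touched by a swap, exactly one diagonal roadblock is produced as in the single-NNI analysis. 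Disjointness of the pairs forces the resulting diagonal positions to be distinct, giving the claimed set $\{(i_1,i_1),\ldots,(i_k,i_k)\}$ with $0 < i_1 < \ldots < i_k < n-1$.

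For part (ii), by Proposition \ref{propBijection} the number of coalescent histories equals the number of monotonic paths from $(0,0)$ to $(n-1,n-1)$ that stay weakly below the diagonal and avoid all $k$ roadblocks. Let $\mathcal{P}$ denote the set of $C_{n-1}$ such diagonal-respecting monotonic paths, and for each $j \in \{1,\ldots,k\}$ let $\mathcal{P}_j \subseteq \mathcal{P}$ be the subset passing through $(i_j,i_j)$. The quantity of interest is $\bigl|\mathcal{P} \setminus \bigcup_{j=1}^k \mathcal{P}_j\bigr|$, and the inclusion--exclusion principle gives
\begin{equation*}
\Bigl|\, \mathcal{P} \setminus \bigcup_{j=1}^k \mathcal{P}_j \,\Bigr| = C_{n-1} + \sum_{\ell=1}^{k} (-1)^\ell \sum_{1 \leq j_1 < \ldots < j_\ell \leq k} \bigl| \mathcal{P}_{j_1} \cap \cdots \cap \mathcal{P}_{j_\ell} \bigr|.
\end{equation*}
The combinatorial fact that closes the argument is that for any $0 < i_{j_1} < \ldots < i_{j_\ell} < n-1$, the number of monotonic paths from $(0,0)$ to $(n-1,n-1)$ that pass through each point $(i_{j_s},i_{j_s})$ and do not cross the diagonal equals $C_{i_{j_1}} \cdot C_{i_{j_2} - i_{j_1}} \cdots C_{i_{j_\ell} - i_{j_{\ell-1}}} \cdot C_{n-1-i_{j_\ell}}$. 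This follows from the standard decomposition: each such path splits uniquely into $\ell+1$ consecutive sub-paths between adjacent diagonal points $(a,a)$ and $(b,b)$ with $a<b$, and after translation each sub-path is a monotonic path from $(0,0)$ to $(b-a,b-a)$ not crossing the diagonal, counted by $C_{b-a}$. Substituting this product into the inclusion--exclusion expansion yields eq.~\ref{eqInclusionExclusion}.

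The main obstacle is the bookkeeping in part (i): verifying that the $k$ disjoint NNI moves produce exactly $k$ distinct diagonal roadblocks and no additional off-diagonal roadblocks. Edge cases to confirm include NNI moves involving the cherry (labels $1$ and $3$, for which $k_s^{(j)}=1$), and checking that no two disjoint swaps can target the same diagonal point; this requires comparing the cherry-adjacent swap $(1,3)$ and the consecutive swap $(2,3)$, but these share the label $3$ and so cannot both appear among disjoint moves.
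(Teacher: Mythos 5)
Your proposal is correct and follows essentially the same route as the paper: part (i) reduces to the single-NNI analysis of Proposition \ref{propNNI}(i) applied to each disjoint swap, and part (ii) applies inclusion--exclusion over the diagonal roadblocks together with the standard decomposition of a sub-diagonal monotonic path through prescribed diagonal points into a product of Catalan numbers. Your treatment of part (i) is in fact slightly more explicit than the paper's (which simply invokes Proposition \ref{propNNI}(i) ``sequentially $k$ times''), and your edge-case check that the $(1,3)$ and $(2,3)$ swaps cannot coexist among disjoint moves is a worthwhile detail.
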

\begin{proof}
As in Proposition \ref{propNNI}, we rephrase a problem of enumerating coalescent histories in the language of roadblocked monotonic paths.

(i) Because the $k$ NNI moves are disjoint, we can apply Proposition \ref{propNNI}i sequentially $k$ times, once for each NNI move. Each of the $k$ NNI moves is associated with a roadblock on the diagonal of the square lattice, the location of which is determined by the identity of the pair of leaves exchanged: if $k_s$ is the smaller of the two labels for leaves of $G$ participating in the move, then the roadblock location is $(k_s-1,k_s-1)$ for $2 \leqslant k_s \leqslant n-1$ and $(1,1)$ for $k_s=1$. Label the roadblocks $(i_1,i_1), (i_2,i_2), \ldots, (i_k,i_k)$, with $0 < i_1 < i_2 < \ldots < i_k < n-1$.

(ii) The total number of monotonic paths on the $(n-1) \times (n-1)$ square lattice is $C_{n-1}$ (Section \ref{secCatalan}). To obtain the desired quantity, we must subtract from $C_{n-1}$ the number of paths that pass through at least one of the $k$ roadblocks. By the inclusion--exclusion principle, this quantity can be written as a sum over nonempty subsets $\beta$ of the $k$ roadblocks of the number of monotonic paths that pass through all roadblocks in $\beta$. In particular, for each $j$ from 1 to $k$, denoting by $\beta_j$ the set of monotonic paths that pass through roadblock $(i_j,i_j)$, the number of monotonic paths that pass through at least one roadblock is
\begin{equation*}
\sum_{\ell=1}^k (-1)^{\ell+1} \left( \sum_{ (j_1, j_2, \ldots, j_\ell) \in \{1,2,\ldots,k \}^\ell  \atop 0< i_{j_{1}}<...<i_{j_{\ell}} < n-1} |\beta_{j_1} \cap \cdots \cap \beta_{j_\ell}| \right).
\end{equation*}
The cardinality of $|\beta_{j_1} \cap \cdots \cap \beta_{j_\ell}|$, representing the number of monotonic paths that pass through $(0,0)$, $(i_{j_1},i_{j_1})$, $\ldots$, $(i_{j_\ell},i_{j_\ell})$, and $(n-1,n-1)$, is a product of Catalan numbers, one for each pair of consecutive points that must be traversed:
$$C_{i_{j_1}} C_{i_{j_2}-i_{j_1}} \cdots C_{i_{j_\ell}-i_{j_\ell -1 }} C_{n - 1 - i_{j_\ell}}.$$ The result then follows.
\end{proof}

Figure \ref{figNNI}B illustrates Proposition \ref{propInclusion} for a case with three disjoint NNI moves. In this example, roadblocks in a 10-leaf tree appear at $(1,1)$, $(4,4)$, and $(9,9)$. By Proposition \ref{propInclusion}, the number of coalescent histories is
$C_9 - (C_1 C_8 + C_4 C_5 + C_8 C_1) + (C_1 C_3 C_5 + C_1 C_7 C_1 + C_4 C_4 C_1) - (C_1 C_3 C_4 C_1) = 2179$.

Proposition \ref{propNNI} additionally has the consequence that for a fixed caterpillar species tree $S$, the largest number of coalescent histories seen for a non-matching caterpillar gene tree $G$ occurs for a gene tree that differs from $S$ by a single NNI move.

\begin{coro}
\label{coroLargest}
Consider a caterpillar species tree topology $S$ with $n$ leaves. Considering all possible caterpillar gene tree topologies $G \neq S$,
(i) the largest number of coalescent histories for $(G,S)$ is obtained when the roadblock set $B_{G,S}$ consists of the single point $(\frac{n}{2}-1, \frac{n}{2}-1)$ or $(\frac{n}{2}, \frac{n}{2})$ for even $n$, or $(\frac{n-1}{2}, \frac{n-1}{2})$ for odd $n$. (ii) It equals
$$C_{n-1}-C_{\lfloor\frac{n-1}{2}\rfloor}C_{\lceil\frac{n-1}{2}\rceil}.$$
\end{coro}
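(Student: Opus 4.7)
The plan is to reduce the maximization over all non-matching caterpillar gene trees to a maximization over single-NNI gene trees, and then to locate the diagonal position that minimizes $C_i C_{n-1-i}$.

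First, I would show that any nonempty caterpillar-friendly roadblock set contains at least one diagonal roadblock. Indeed, applying condition (i) of Remark \ref{remBGS} with $k = j$ to any $(i,j) \in B_{G,S}$ yields $(j,j) \in B_{G,S}$. So, for an arbitrary $G \neq S$, pick some diagonal roadblock $(j^{*},j^{*}) \in B_{G,S}$ with $1 \le j^{*} \le n-2$. By Proposition \ref{propNNI}(i), there exists a single-NNI gene tree $G^{*}$ whose roadblock set is exactly $\{(j^{*},j^{*})\}$. Since this set is a subset of $B_{G,S}$, every roadblocked monotonic path for $(G,S)$ is also a roadblocked monotonic path for $(G^{*},S)$, so by Proposition \ref{propBijection} the pair $(G^{*},S)$ has at least as many coalescent histories as $(G,S)$. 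Hence the maximum over $G \neq S$ is attained by some single-NNI gene tree, and by Proposition \ref{propNNI}(ii) it equals
\[
C_{n-1} - \min_{1 \le i \le n-2} C_i C_{n-1-i}.
\]

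Second, I would minimize $f(i) := C_i C_{n-1-i}$ on $\{1,2,\ldots,n-2\}$. Using the standard ratio $C_{i+1}/C_i = 2(2i+1)/(i+2)$, direct computation gives
\[
\frac{f(i+1)}{f(i)} = \frac{(2i+1)(n-i)}{(i+2)(2n-2i-3)},
\]
and cross-multiplying shows that the sign of $f(i+1) - f(i)$ matches the sign of $2i - n + 2$. Thus $f$ strictly decreases for $i < (n-2)/2$ and strictly increases for $i > (n-2)/2$. For odd $n$ the unique minimizer is $i = (n-1)/2$; for even $n$ the equation $2i - n + 2 = 0$ at $i = n/2 - 1$ forces $f(n/2-1) = f(n/2)$, and these two values are the joint minimizers. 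In both parities the minimum equals $C_{\lfloor (n-1)/2 \rfloor} C_{\lceil (n-1)/2 \rceil}$, proving (ii) and pinning down the roadblock locations in (i).

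The main obstacle is the monotonicity analysis of $f$. The reduction to single-NNI gene trees is essentially free given Remark \ref{remBGS} and Proposition \ref{propNNI}, but one must handle the even-$n$ case carefully, since there the ratio $f(i+1)/f(i)$ equals $1$ at exactly one index and the minimum is attained at two adjacent lattice points rather than one. A cleaner alternative exploits log-convexity of the Catalan sequence combined with the symmetry $f(i) = f(n-1-i)$---a convex discrete function symmetric about a point is minimized there---but the direct ratio calculation is the most self-contained route and is what I would pursue.
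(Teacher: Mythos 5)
Your argument is correct and follows essentially the same route as the paper's proof: reduce to a single roadblock, observe via Remark \ref{remBGS} that it can be taken on the diagonal so that Proposition \ref{propNNI} applies, and then minimize $C_i C_{n-1-i}$ over $1 \leqslant i \leqslant n-2$. The only difference is that where the paper cites Corollary 3.11 of \cite{Rosenberg07:jcb} for that minimization, you prove it directly via the ratio $C_{i+1}/C_i = 2(2i+1)/(i+2)$, and your computation (the sign of $f(i+1)-f(i)$ matching that of $2i-n+2$, with the tie at $i=n/2-1$ for even $n$) checks out.
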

\begin{proof}
First, note that for each pair $(G,S)$ whose roadblock set $B_{G,S}$ has more than one point, we can identify a pair $(G^\prime,S)$ whose roadblock set consists of a single point in $B_{G,S}$ and that hence has at least as many coalescent histories as $(G,S)$. By Remark \ref{remBGS}, the roadblock in a roadblock set consisting of a single point must be located on the diagonal.

Applying Proposition \ref{propNNI}ii, for the caterpillar gene tree topology $G$ that maximizes the number of coalescent histories with fixed $S$, that number of coalescent histories must equal $C_{n-1}-C_{i}C_{n-1-i}$ for some $i$ with $1 \leqslant i \leqslant n-2$. By Corollary 3.11 of \cite{Rosenberg07:jcb}, for fixed $n$, this quantity is maximized when $\{i, n-1-i\} = \{ \lfloor \frac{n-1}{2} \rfloor, \lceil \frac{n-1}{2} \rceil \}$.

From the proof of Proposition \ref{propNNI}, for fixed $S$, this maximum is associated with the gene tree topologies $G$ that differ from $S$ in that the leaves abutting the middle coalescence in the path from cherry to root of $G$ are transposed; the case of $n$ odd has one such coalescence and the case of $n$ even has two.
\end{proof}

\begin{figure}[tpb]
\vskip -1.4cm
\begin{center}
\includegraphics[width=7.5in]{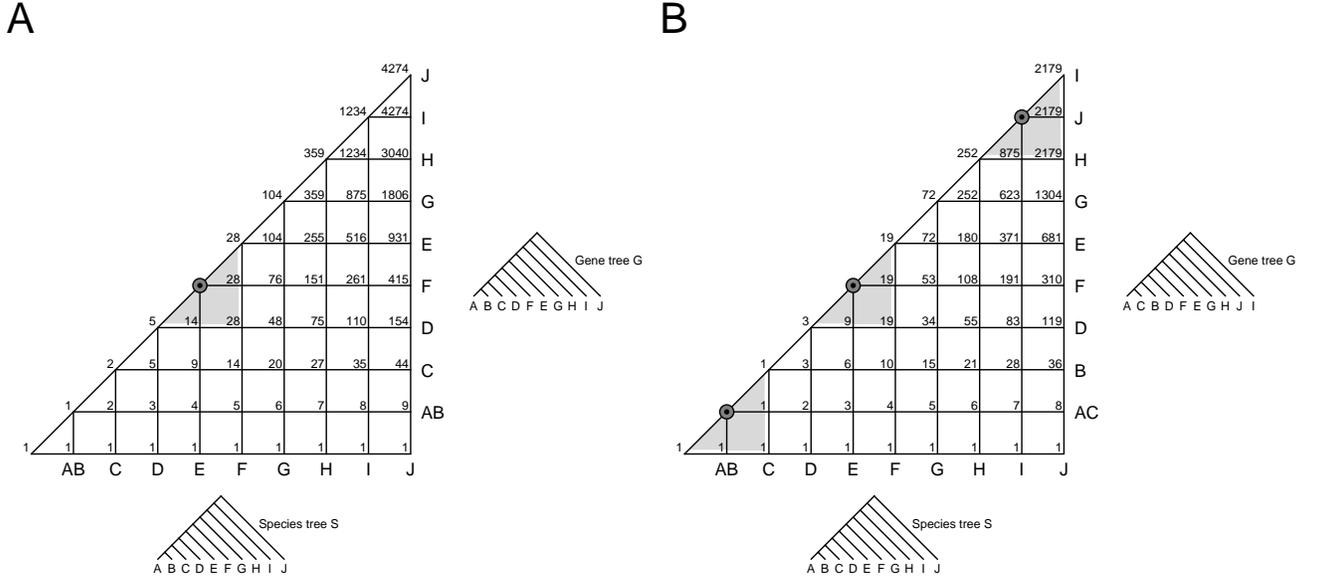}
\end{center}
\vskip -16cm
\caption{The number of coalescent histories for caterpillar gene tree topologies $G$ and species tree topologies $S$ that differ by nearest-neighbor-interchange (NNI) moves. (A) $G$ and $S$ differ by a single NNI move. (B) $G$ and $S$ differ by multiple disjoint NNI moves.}
\label{figNNI}
\end{figure}

Corollary \ref{coroLargest} extends Corollary \ref{coroStrictlyGreater} by giving the exact number of coalescent histories for the pair $(G,S)$ that has the largest number of coalescent histories among non-matching caterpillars. For odd $n$, the number of coalescent histories in Corollary \ref{coroLargest} is $C_{n-1}-C_{(n-1)/2}^{2}$. For example, for $S=((((A,B),C),D),E)$ and $G=((((A,B),D),C),E)$, the number of coalescent histories is $C_4-C_2^2 = 14 - 2^2 = 10$ coalescent histories. For even $n$, the number of coalescent histories in Corollary \ref{coroLargest} is $C_{n-1}-C_{n/2-1} C_{n/2}$. For example, for $S=(((((A,B),C),D),E),F)$, both $G=(((((A,B),D),C),E),F)$ and $G=(((((A,B),C),E),D),F)$ have $C_5-C_2 C_3 = 42 - 2 \times 5 = 32$ coalescent histories. Figure \ref{figNNI}A gives the largest number of coalescent histories among nonmatching caterpillars with $n=10$ leaves.

We can quickly observe that the largest number of coalescent histories among  discordant caterpillar topologies grows at the same rate as the number of coalescent histories for matching caterpillars.

\begin{coro} Considering all non-matching caterpillar pairs $(G,S)$ with $n$ leaves, the largest number of coalescent histories for $(G,S)$ is asymptotic to $C_{n-1}$.
\label{eqCoroAsymptotic}
\end{coro}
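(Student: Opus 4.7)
The plan is to read off the exact formula for the maximum from Corollary \ref{coroLargest} and then apply the standard asymptotic for the Catalan numbers. By Corollary \ref{coroLargest}, the largest number of coalescent histories among non-matching caterpillar pairs $(G,S)$ with $n$ leaves equals
\[
M_n := C_{n-1} - C_{\lfloor (n-1)/2 \rfloor}\, C_{\lceil (n-1)/2 \rceil}.
\]
Thus $M_n \sim C_{n-1}$ is equivalent to showing that
\[
r_n := \frac{C_{\lfloor (n-1)/2 \rfloor}\, C_{\lceil (n-1)/2 \rceil}}{C_{n-1}} \longrightarrow 0
\]
as $n \to \infty$, which is what I would target.

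To control $r_n$, I would invoke the classical Stirling-based asymptotic $C_m \sim 4^m / (m^{3/2}\sqrt{\pi})$. Writing $a = \lfloor (n-1)/2 \rfloor$ and $b = \lceil (n-1)/2 \rceil$, one has $a + b = n - 1$, so the powers of $4$ in the numerator and denominator of $r_n$ cancel exactly. The remaining factor is
\[
r_n \sim \frac{(n-1)^{3/2}}{\sqrt{\pi}\,(ab)^{3/2}}.
\]
Since $ab = \lfloor (n-1)/2 \rfloor \lceil (n-1)/2 \rceil \geqslant (n-1)^2/4 - O(1)$, we obtain $r_n = O(n^{-3/2})$, which tends to $0$. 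Consequently $M_n / C_{n-1} = 1 - r_n \to 1$, giving $M_n \sim C_{n-1}$ as claimed.

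There is essentially no serious obstacle here; the argument is a two-line asymptotic calculation once Corollary \ref{coroLargest} is in hand. The only point that requires a little care is the parity distinction between even and odd $n$, which affects $\lfloor (n-1)/2 \rfloor$ and $\lceil (n-1)/2 \rceil$, but the bound $ab \geqslant (n-1)^2/4 - 1/4$ holds uniformly, so both parities can be handled together without a case split.
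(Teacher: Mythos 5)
Your proposal is correct and follows essentially the same route as the paper: both start from the exact maximum $C_{n-1}-C_{\lfloor (n-1)/2 \rfloor}C_{\lceil (n-1)/2 \rceil}$ given by Corollary \ref{coroLargest} and apply the Stirling-based asymptotic $C_m \sim 4^m/(m^{3/2}\sqrt{\pi})$ to show the subtracted term is negligible. The only difference is cosmetic: the paper splits into even and odd $n$, whereas you handle both parities uniformly via the bound on $\lfloor (n-1)/2 \rfloor \lceil (n-1)/2 \rceil$.
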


\begin{proof}
Using Stirling's approximation, $n!\sim \sqrt{2\pi n}\left(\frac{n}{e}\right)^{n}$, we can verify that $C_{n} \sim 4^n / (n^{3/2} \sqrt{\pi} )$.

If $n$ is odd, then the largest number of coalescent histories for a non-matching pair satisfies
\begin{equation*}
C_{n-1}-C_{\frac{n-1}{2}}^2 \sim \frac{4^{n-1}}{(n-1)^{\frac{3}{2}} \sqrt{\pi}} - \frac{4^{n-1} 2^3}{(n-1)^3 \pi}.
\end{equation*}
If $n$ is even, then
\begin{equation*}
C_{n-1}-C_{\frac{n}{2}-1} C_{\frac{n}{2}} \sim \frac{4^{n-1}}{(n-1)^{\frac{3}{2}} \sqrt{\pi}} -  \frac{4^{n-1} 2^3}{(n^2-2n)^{\frac{3}{2}} \pi}.
\end{equation*}

In both cases, the leading term dominates, and $C_{n-1} - C_{\lfloor (n-1)/2 \rfloor} C_{\lceil (n-1)/2 \rceil} \sim C_{n-1}$.
\end{proof}

\subsection{Reverse incrementation of the leaf labels}
\label{secReverseCyclic}

\noindent Next, for a fixed caterpillar species tree $S$, we consider gene trees $G$ that differ from $S$ by incrementation.

Consider a caterpillar species tree $S$ with $n$ leaves and a caterpillar gene tree topology that differs from $S$ by an incrementation. By definition, the leaves of some component $G^\prime$ of $G$ and $S^\prime$ of $S$ differ by cyclic permutation. Recall that an incrementation with two labels is a NNI move.

\begin{prop} If $G$ is obtained by a reverse incrementation of $S$, then the roadblock set $B_{G,S}$ consists of a set of consecutive points on the diagonal of the square lattice.
\label{propReverse}
\end{prop}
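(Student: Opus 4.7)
The plan is to read the roadblocks directly off the permutation $\pi(G)$ induced by the reverse incrementation and then apply the characterization $B_{G,S} = \{(i,j) \, : \, 1 \leqslant j \leqslant i \leqslant n-1 \text{ and } i < f(g_{j+1})\}$ from Section~\ref{secNonmatching}. After relabeling so that $s_k = k$, suppose the reverse incrementation acts on the component at positions $\{a, a+1, \ldots, b\}$ with $1 \leqslant a < b \leqslant n$. By Definition~\ref{defiIncrementation} applied in reverse, $(s_a, s_{a+1}, \ldots, s_b)$ is sent to $(s_{a+1}, s_{a+2}, \ldots, s_b, s_a)$, so that
\begin{equation*}
\pi_k(G) = \begin{cases} k & \text{if } k < a \text{ or } k > b, \\ k+1 & \text{if } a \leqslant k \leqslant b-1, \\ a & \text{if } k = b. \end{cases}
\end{equation*}

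The main computation is a case analysis of $f(g_k) = \max_{\ell \leqslant k} \pi_\ell(G) - 1$. For $k < a$ the running maximum of $\pi_\ell$ equals $k$, so $f(g_k) = k-1$. For $a \leqslant k \leqslant b-1$ the running maximum is the newly introduced value $\pi_k = k+1$, so $f(g_k) = k$. For $k > b$ the running maximum returns to $k$, so $f(g_k) = k-1$. The step I expect to be the key bookkeeping obstacle is the endpoint $k = b$: the wrap-around $\pi_b = a$ has to be shown not to increase the running maximum beyond $b$, which was already attained at $\ell = b-1$. Since $a \leqslant b-1 < b$, it does not, and $f(g_b) = b-1$. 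Without this observation the putative roadblock set could extend an extra column.

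Reading off the roadblock condition $j \leqslant i < f(g_{j+1})$ from the case analysis, the only $j$ for which a solution exists are those with $f(g_{j+1}) > j$, i.e., $j+1 \in \{a, a+1, \ldots, b-1\}$; for each such $j$, the single admissible $i$ is $i = j$. This gives
\begin{equation*}
B_{G,S} = \{(j,j) \, : \, \max(a-1,\,1) \leqslant j \leqslant b-2\},
\end{equation*}
which is a block of consecutive points on the diagonal, as claimed. The cherry convention for $k \in \{1,2\}$ only affects the lower endpoint when $a \in \{1,2\}$, shifting it up to $j = 1$, and does not disturb consecutivity; this handles the only edge case one needs to verify separately.
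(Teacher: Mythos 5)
Your proof is correct and follows essentially the same route as the paper: write down the permutation $\pi$ induced by the reverse incrementation, evaluate the running maxima $f(g_k)=\max_{\ell\leqslant k}\pi_\ell(G)-1$, and read the roadblocks off the condition $j \leqslant i < f(g_{j+1})$, concluding that the only admissible points are the diagonal ones with $\max(a-1,1)\leqslant j\leqslant b-2$. Your explicit check at the wrap-around endpoint $k=b$ is in fact slightly more careful than the paper's, whose intermediate claim that $f(g_k)=k$ holds across the whole component overstates the case at $k=k_\ell$ (where $f(g_{k_\ell})=k_\ell-1$), although its final roadblock list agrees with yours.
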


\begin{proof}
Consider labels $k_s, k_\ell \in \{1,2,\ldots,n\}$, with $k_s < k_\ell$ and $k_\ell \neq 2$. By definition of reverse incrementation, for some component of $G$ with leaves sequentially labeled $k_s, k_s+1, \ldots, k_\ell$ from the cherry toward the root, associated leaves of $S$ are labeled $\pi_{k_s}(G)=k_s+1$, $\pi_{k_s+1}(G)=k_s+2, \ldots, \pi_{k_\ell-1}(G)=k_\ell, \pi_{k_\ell}(G)=k_s$.

As in the proof of Proposition \ref{propNNI}, we compute the minimal internal edge of $S$ ancestral to each leaf $g_k$ of $G$, $k \in \{k_s, k_s+1, \ldots, k_\ell \}$. We obtain $f(g_k)=k$ if $2 \leqslant k \leqslant n$ and $f(g_1)=2$ if $k=1$.

The roadblocks are the points $(i,j)$ satisfying $i < f(g_{j+1})$. We therefore find that the roadblocks are precisely those points $(k_s-1,k_s-1), \ldots, (k_\ell-2,k_\ell-2)$ if $k_s > 1$ and $(1,1), \ldots, (k_\ell-2,k_\ell-2)$ if $k_s=1$.
\end{proof}

Because all roadblocks lie on the diagonal under reverse incrementation, eq.~\ref{eqInclusionExclusion} can be applied to count coalescent histories. In the application of eq.~\ref{eqInclusionExclusion}, the distinct points on the diagonal through which monotonic paths cannot pass are $(k_s-1,k_s-1), \ldots, (k_\ell-2,k_\ell-2)$ if $k_s > 1$ and $(1,1), \ldots, (k_\ell-2,k_\ell-2)$ if $k_s=1$.

For example, in Figure \ref{figReverse}A, the reverse incrementation of leaf labels $C$, $D$, and $E$ has $k_s=3$ and $k_\ell=5$, so that the roadblocks lie at $(2,2)$ and $(3,3)$. The number of coalescent histories is obtained by eq.~\ref{eqInclusionExclusion} as $C_9 - (C_2 C_7 + C_3 C_6) + C_2 C_1 C_6 = 3608$.

If the reverse incrementation permutes all the labels, then all points $(1,1), \ldots (n-1,n-1)$ are roadblocks, and the number of coalescent histories is the number of monotonic paths not crossing the diagonal that lies one unit below the $y=x$ line (Figure \ref{figReverse}B). As the number of monotonic paths that do not pass above a diagonal of a square lattice, this computation gives $C_{n-2}$ coalescent histories. At the same time, the inclusion--exclusion computation of eq.~\ref{eqInclusionExclusion} produces a sum that traverses all subsets of the points $(1,1), \ldots, (n-1,n-1)$.

Thus, by use of eq.~\ref{eqInclusionExclusion}, this construction gives a combinatorial proof of a Catalan number identity.
\begin{coro} The Catalan number $C_{n-2}$ can be written as an alternating sum of products of Catalan numbers, where the sum proceeds over all compositions of $n-1$:
$$C_{n-2} = \sum_{k=1}^{n-1} (-1)^{k+1} \sum_{(v_1, \ldots, v_k) \in \{ \mathbf{v} : \sum_{i=1}^k v_i = n-1 \} } \prod_{i=1}^k C_{v_k}$$
\label{coroCompositions}
\end{coro}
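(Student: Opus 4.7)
The plan is to apply the machinery of Propositions \ref{propBijection}, \ref{propReverse}, and \ref{propInclusion} to the extreme case of reverse incrementation that permutes \emph{all} $n$ labels, and then read the resulting combinatorial identity in two different ways.

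First, I would take the species tree $S$ to be any $n$-leaf caterpillar and let $G$ be the caterpillar obtained by a reverse incrementation of $S$ that permutes the entire leaf set, that is, with $k_s=1$ and $k_\ell=n$ in the notation of Proposition \ref{propReverse}. By that proposition, the roadblock set $B_{G,S}$ then consists of precisely the $n-2$ diagonal points $(1,1),(2,2),\ldots,(n-2,n-2)$. On one hand, as remarked in the text immediately preceding the corollary, a monotonic path from $(0,0)$ to $(n-1,n-1)$ that avoids all of these diagonal points is forced to stay strictly below the $y=x$ line except at its endpoints, hence corresponds to a monotonic path not crossing the diagonal one unit below $y=x$; by the standard Catalan enumeration the number of such paths is $C_{n-2}$. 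By Proposition \ref{propBijection}, this is also the number of coalescent histories for $(G,S)$.

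Second, I would evaluate the same quantity using inclusion--exclusion, exactly as in eq.~\ref{eqInclusionExclusion} of Proposition \ref{propInclusion}, which applies since all roadblocks lie on the diagonal. This gives
\begin{equation*}
C_{n-2} = C_{n-1} + \sum_{\ell=1}^{n-2} (-1)^{\ell} \sum_{0<i_{j_1}<\cdots<i_{j_\ell}<n-1} C_{i_{j_1}}\,C_{i_{j_2}-i_{j_1}}\cdots C_{i_{j_\ell}-i_{j_{\ell-1}}}\,C_{n-1-i_{j_\ell}}.
\end{equation*}
The last step is a bookkeeping reindexing: a subset $\{i_{j_1}<\cdots<i_{j_\ell}\}\subseteq\{1,\ldots,n-2\}$ is in bijection with the composition $(v_1,\ldots,v_{\ell+1})$ of $n-1$ into $k=\ell+1$ positive parts defined by $v_1=i_{j_1}$, $v_r=i_{j_r}-i_{j_{r-1}}$ for $2\leqslant r\leqslant \ell$, and $v_{\ell+1}=n-1-i_{j_\ell}$, and the product of Catalan numbers in eq.~\ref{eqInclusionExclusion} becomes $\prod_{i=1}^{k} C_{v_i}$. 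The case $\ell=0$ supplies the leading term $C_{n-1}$, corresponding to the unique composition $(n-1)$ with one part. Substituting $k=\ell+1$ and using $(-1)^\ell=(-1)^{k+1}$ yields the stated identity.

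The main obstacle is essentially none: once reverse incrementation of all labels has been identified as the mechanism producing the full diagonal roadblock set, the corollary reduces to combining two independent counts of the same set of monotonic paths. The only point requiring a bit of care is the reindexing from subsets of diagonal roadblocks to compositions of $n-1$, which must correctly account for the two boundary gaps (from $0$ to $i_{j_1}$ and from $i_{j_\ell}$ to $n-1$) so that the number of parts in the composition is one more than the number of chosen roadblocks.
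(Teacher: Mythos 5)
Your proposal is correct and follows essentially the same route as the paper: the text surrounding the corollary derives it by taking the reverse incrementation that permutes all labels, observing via Proposition \ref{propReverse} that the roadblocks fill the diagonal so that the roadblocked paths number $C_{n-2}$, and equating this with the inclusion--exclusion count of eq.~\ref{eqInclusionExclusion}. Your explicit reindexing from subsets of diagonal roadblocks to compositions of $n-1$ (and your correct identification of the roadblock set as $(1,1),\ldots,(n-2,n-2)$) is a welcome tightening of a step the paper leaves implicit, but it is the same argument.
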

\noindent This identity can be seen as counting Dyck paths of semi-length $n-1$ with no internal returns to the origin in two ways. $C_{n-2}$ gives the number of Dyck paths of semi-length $n-2$, as a Dyck path of length $n-1$ with no internal returns begins with an up-step that is followed by a Dyck path of semi-length $n-2$ and then a a down-step. The right-hand side instead uses the inclusion--exclusion principle to perform the computation by excluding Dyck paths of semi-length $n-1$ that have at least one return to the origin.

Interestingly, a reverse cycle that permutes all labels, even if it is not an incrementation, gives a Catalan number of coalescent histories, as it generates a roadblock set that consists of one or more diagonal lines. For example, with $S=(((((((((A,B),C),D),E),F),G),H),I),J)$, the reverse incrementation $G=(((((((((B,C),D),E),F),G),H),I),J),A)$
gives $C_8=1430$ coalescent histories (Figure \ref{figReverse}B), the reverse cycle $G=(((((((((C,D),E),F),G),H),I),J),A),B)$ gives $C_7=429$ coalescent histories, the reverse cycle $G=(((((((((D,E),F),G),H),I),J),A),B),C)$ gives $C_6=132$ coalescent histories, and so on.

We note also that eq.~\ref{eqInclusionExclusion} continues to apply if $S$ differs from $G$ by multiple disjoint reverse incrementations, as in Figure \ref{figReverse}C, which adds a two-leaf incrementation---an NNI move---to Figure \ref{figReverse}A. In this case, the number of coalescent histories is $C_9 - (C_2 C_7 + C_3 C_6 + C_7 C_2) + (C_2 C_1 C_6 + C_2 C_5 C_2 + C_3 C_4 C_2) - (C_2 C_1 C_4 C_2) = 3002$.

\subsection{Forward incrementation of the leaf labels}
\label{secForwardCyclic}

\noindent In the case that $G$ represents a forward rather than a reverse incrementation of $S$, the roadblocks appear in a triangular region rather than exclusively on the diagonal of the square lattice.

\begin{prop}
If $G$ is obtained by forward incrementation of $S$, then the roadblock set $B_{G,S}$ consists of a triangle of points on and below the diagonal of the square lattice.
\label{propForward}
\end{prop}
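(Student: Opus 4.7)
The plan is to mimic the strategy used for Proposition~\ref{propReverse}: write down the permutation $\pi$ explicitly in the forward case, compute $f(g_k)$ for each $k$, and then read off the roadblock set from the characterization $B_{G,S} = \{(i,j) : 1 \leqslant j \leqslant i \leqslant n-1, \, i < f(g_{j+1})\}$ established in Section~\ref{secNonmatching}.

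First I would fix notation: let $k_s < k_\ell$ be the smallest and largest indices in the incremented component, with $k_\ell \neq 2$. Under a forward incrementation, the species-tree label $s_{k_\ell}$, which is the label farthest from the cherry within the component, is moved to the position $k_s$ closest to the cherry, while each of $s_{k_s}, s_{k_s+1}, \ldots, s_{k_\ell-1}$ shifts one position outward. This yields $\pi_{k_s}(G) = k_\ell$ and $\pi_{k_s+m}(G) = k_s+m-1$ for $1 \leqslant m \leqslant k_\ell - k_s$, with $\pi_k(G) = k$ outside the component.

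Next I would compute $f(g_{j+1})$. Outside the incrementation range the permutation is the identity, so $f(g_{j+1}) = j$ and the condition $i < f(g_{j+1})$ is incompatible with $j \leqslant i$; no roadblocks arise. Inside the range, the crucial observation is that the largest index $k_\ell$ appears at the very first affected position $k_s$. Consequently, the running prefix maximum $\max_{\ell \leqslant j+1} \pi_\ell(G)$ equals $k_\ell$ for every $j+1 \in \{k_s, k_s+1, \ldots, k_\ell\}$, so $f(g_{j+1}) = k_\ell - 1$ throughout this range. Substituting into the roadblock condition, $(i,j)$ is a roadblock precisely when $k_s - 1 \leqslant j \leqslant k_\ell - 2$ and $j \leqslant i \leqslant k_\ell - 2$, and the union over these $j$ is the right triangle with legs on the row $j = k_s - 1$ and the column $i = k_\ell - 2$, and with its hypotenuse on the diagonal $j = i$. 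This is the triangular set of points on and below the diagonal claimed by the proposition.

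The main obstacle is the bookkeeping around the cherry. When $k_s \in \{1, 2\}$, the definition of $f(g_k)$ for $k \in \{1, 2\}$ uses the maximum over the unordered pair $\{\pi_1(G), \pi_2(G)\}$ rather than a true prefix maximum, so this case must be handled separately. However, the label $s_{k_\ell}$ still lands in the cherry pair under the forward incrementation, so $\max\{\pi_1(G), \pi_2(G)\} = k_\ell$ and the same value $f(g_k) = k_\ell - 1$ is recovered. The triangular structure of $B_{G,S}$ is therefore preserved, mirroring the edge-case treatment in Proposition~\ref{propReverse}.
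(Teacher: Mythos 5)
Your proposal is correct and follows essentially the same route as the paper's proof: write out $\pi$ explicitly for a forward incrementation, observe that the prefix maximum equals $k_\ell$ throughout the component so that $f(g_k)=k_\ell-1$ there, and read off the triangular roadblock set from the condition $i<f(g_{j+1})$. Your explicit handling of the cherry case $k_s\in\{1,2\}$ is a small addition the paper leaves implicit, and it is consistent with the paper's definitions.
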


\begin{proof}
Consider labels $k_s,k_\ell \in \{1,2,\ldots,n\}$, with $k_s < k_\ell$ and $k_\ell \neq 2$. By definition of forward incrementation, for some component of $G$ with leaves sequentially labeled $k_s, k_s+1, \ldots, k_\ell$ from the cherry toward the root, associated leaves of $S$ are labeled $\pi_{k_s}(G) = k_\ell$, $\pi_{k_s+1}(G)= k_s$, $\pi_{k_s+2}(G)=k_s+1, \ldots, \pi_{k_\ell}(G)=k_\ell-1$.

We use Proposition \ref{propBijection} and compute the minimal internal edge of $S$ ancestral to each leaf $g_k$ of $G$, $k \in \{k_s, k_s+1, \ldots, k_\ell\}$. We obtain $f(g_k) = k_\ell-1$.

The roadblocks are the points $(i,j)$ satisfying $i < f(g_{j+1})$. Hence, the roadblocks are points $(k_s-1,k_s-1), \ldots, (k_\ell-2,k_s-1)$, $(k_s,k_s), \ldots, (k_\ell-2, k_s), \ldots, (k_\ell-2,k_\ell-2)$.
\end{proof}

We can use Catalan's trapezoids to count coalescent histories for forward incrementations, noting that every monotonic path from $(0,0)$ to $(n-1,n-1)$ passes through exactly one point on a diagonal line from the lower right corner of the triangle of roadblocks, $(k_\ell-2,k_s-1)$ for $2 \leqslant k_s \leqslant n-1$ and $(k_\ell-2,1)$ for $k_s=1$, to the bottom edge or right edge of the lattice (Figure \ref{figForward}A).

If $2 \leqslant k_s \leqslant n-1$, then this line has points $(k_\ell-1+c,k_s-2-c)$ for $c=0,1,\ldots,\min(k_s-2,n-k_\ell)$; if $k_s=1$, then the line has a single point $(k_\ell-1,0)$. We can combine the two cases with the Kronecker delta, capturing the line with the expression $(k_\ell-1+c,k_s-2+\delta_{k_s,1}-c)$ for $c=0,1,\ldots,\min(k_s-2+\delta_{k_s,1},n-k_\ell)$.

We can then count monotonic paths from $(0,0)$ to some point on the line and from there to $(n-1,n-1)$.

\begin{prop}
Consider a caterpillar species tree topology $S$ with $n$ leaves and a caterpillar gene tree topology $G$ that differs from $S$ by a forward incrementation described by the component $k_s,\ldots,k_\ell$ of $G$. The number of coalescent histories for $(G,S)$ can be written
\begin{equation*}
\sum_{c=0}^{\min(k_s-2+\delta_{k_s,1},n-k_\ell)} D(k_\ell-1+c,k_s-2+\delta_{k_s,1}-c) D_{k_\ell-k_s-\delta_{k_s,1}+2+2c}
(n-k_\ell-c, n-k_s+1-\delta_{k_s,1}+c).
\end{equation*}
where functions $D$ and $D_m$ follow eqs.~\ref{eqCatalansTriangle} and \ref{eqCatalansTrapezoid}, respectively.
\label{propMultipleForward}
\end{prop}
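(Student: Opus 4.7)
The plan is to combine the bijection of Proposition \ref{propBijection} with the triangular roadblock structure of Proposition \ref{propForward}, and to partition the admissible roadblocked monotonic paths by where they cross the anti-diagonal $L : x + y = k_\ell + k_s - 3 + \delta_{k_s, 1}$, which passes through the lower-right corner $(k_\ell - 2,\, k_s - 1 + \delta_{k_s, 1})$ of the roadblock triangle. Because $x + y$ strictly increases at every step of a monotonic path, each path from $(0, 0)$ to $(n-1, n-1)$ meets $L$ in exactly one lattice point, so the total count factors as a sum over admissible crossing points of the product of two independent path counts.

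I first verify that the admissible crossing points on $L$ that satisfy $0 \leqslant y \leqslant x$ and $x \leqslant n-1$ are precisely the points $P_c = (k_\ell - 1 + c,\, k_s - 2 + \delta_{k_s, 1} - c)$ for $c \in \{0, 1, \ldots, \min(k_s - 2 + \delta_{k_s, 1},\, n - k_\ell)\}$: any point on $L$ with larger $y$-coordinate lies in the roadblock triangle by Proposition \ref{propForward}, while the lower bound $y \geqslant 0$ and the upper bound $x \leqslant n - 1$ give the limits on $c$. I next invoke the observation that avoids tracking roadblocks separately on either half of a split path: any monotonic path from $(0, 0)$ to $P_c$ has $y$-coordinate at most $k_s - 2 + \delta_{k_s, 1}$, strictly below the minimum roadblock $y$-coordinate $k_s - 1 + \delta_{k_s, 1}$, while any monotonic path from $P_c$ to $(n - 1, n - 1)$ has $x$-coordinate at least $k_\ell - 1$, strictly above the maximum roadblock $x$-coordinate $k_\ell - 2$. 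Hence each half-path need only stay weakly below the diagonal $y = x$.

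The first factor is then immediately $D(k_\ell - 1 + c,\, k_s - 2 + \delta_{k_s, 1} - c)$ by eq.~\ref{eqCatalansTriangle}. For the second factor, I apply the affine translation $(x, y) \mapsto (x - (k_\ell - 1 + c),\, y - (k_s - 2 + \delta_{k_s, 1} - c))$, which carries $P_c$ to the origin and $(n - 1, n - 1)$ to $(n - k_\ell - c,\, n - k_s + 1 - \delta_{k_s, 1} + c)$, and converts the constraint $y \leqslant x$ into $y' \leqslant x' + (k_\ell - k_s + 1 - \delta_{k_s, 1} + 2c)$. By eq.~\ref{eqCatalansTrapezoid}, the number of translated paths satisfying this constraint equals the Catalan trapezoid entry $D_{k_\ell - k_s - \delta_{k_s, 1} + 2 + 2c}(n - k_\ell - c,\, n - k_s + 1 - \delta_{k_s, 1} + c)$, completing the factorization. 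Summing over admissible $c$ yields the stated formula. The main obstacle is the careful bookkeeping of the $\delta_{k_s, 1}$ correction, which is needed to unify the generic case $k_s \geqslant 2$ with the boundary case $k_s = 1$ in which the forward incrementation involves the cherry leaves; checking the boundary values $c = 0$ and $c = \min(k_s - 2 + \delta_{k_s, 1},\, n - k_\ell)$ confirms that the sum enumerates exactly the admissible crossing points.
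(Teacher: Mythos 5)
Your proof is correct and follows essentially the same route as the paper: every admissible path is split at its unique crossing of the anti-diagonal through the lower-right corner of the roadblock triangle, with the two halves counted by Catalan's triangle and Catalan's trapezoid respectively. You supply more detail than the paper's very terse proof---in particular the verification that each half-path automatically avoids the roadblocks and the derivation of the trapezoid order $k_\ell-k_s-\delta_{k_s,1}+2+2c$---but the underlying argument is the same.
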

\begin{proof}
Each monotonic path from $(0,0)$ to $(n-1,n-1)$, proceeds through a point on the diagonal associated with the forward incrementation. The number of paths to arrive at that point from $(0,0)$ is tabulated by Catalan's triangle (eq.~\ref{eqCatalansTriangle}), and the number of paths to reach $(n-1,n-1)$ by Catalan's trapezoids (eq.~\ref{eqCatalansTrapezoid}).
\end{proof}

Figure \ref{figForward}A provides an example. In the figure, $k_s=3$ and $k_\ell=5$, so that each roadblocked monotonic path must pass through $(4,1)$ or $(5,0)$. Figure \ref{figForward}B illustrates the Catalan trapezoid from $(4,1)$ to $(9,9)$, and Figure \ref{figForward}C illustrates the Catalan trapezoid from $(5,0)$ to $(9,9)$. Because the number of paths from $(0,0)$ to $(4,1)$ is 4 and the number of paths from $(0,0)$ to $(5,0)$ is 1, the number of coalescent histories is $4\times 572 + 1 \times 429 = 2717$. This value is returned by the proposition, which gives $\sum_{c=0}^{\min(1,5)} D(4+c,1-c) D_{4+2c}(5-c,8+c)
= D(4,1)D_4(5,8) + D(5,0)D_6(4,9) = 4 \times 572 + 1 \times 429 = 2717$.

Note that we can analyze cases with multiple disjoint forward incrementations by identifying their associated negatively sloping diagonals through which all monotonic paths must pass. The number of coalescent histories can be obtained by a nested sum counting monotonic paths that pass through exactly one point on each diagonal. Changing the perspective to consider the Dyck path associated with the roadblock set for a composition of disjoint forward incrementations, each peak in the Dyck path generates a diagonal, and we can tabulate monotonic paths that pass through points on each of these diagonals.

For example, in Figure \ref{figForward}D, the Dyck path associated with the roadblock set has four peaks. All monotonic paths from $(0,0)$ to $(9,9)$ must pass through two of these, at $(1,0)$ and $(9,8)$. The other two peaks generate diagonals through which all monotonic paths must pass, so that all paths must pass through $(4,1)$ or $(5,0)$ and through $(8,4)$ or $(9,3)$. The number of paths passing through  $(4,1)$ and $(8,4)$ is $D(4,1) D_4(4,3) D_5(1,5) = 700$; the number of paths through $(4,1)$ and $(9,3)$ is $D(4,1) D_4(5,2) D_7(0,6) = 84$; the number through  $(5,0)$ and $(8,4)$ is $D(5,0) D_6(3,4) D_5(1,5) = 175$; and the number through $(5,0)$ and $(9,3)$ is $D(5,0) D_6(4,3) D_7(0,6) = 35$.
In total, the number of paths is 994.

With this perspective, we can see that such an approach to enumeration applies to any Dyck path, not just those that represent disjoint forward incrementations: for every peak in the Dyck path, a diagonal list of points is generated through which each monotonic path from $(0,0)$ to $(n-1,n-1)$ must pass. We consider all possible choices of points, one on each diagonal, and tabulate paths through those points by use of Catalan's triangle and Catalan's trapezoid. For a general pair of caterpillar trees, such an approach can reduce the number of summations in eq.~\ref{eqRecursionCaterpillar2} from $n-1$ to the number of peaks in the associated Dyck path.

The number of Dyck paths of semilength $n$ with exactly $k$ peaks follows the Narayana numbers $N(n,k) = \frac{1}{n} {n \choose k} {n \choose k-1}$ \citep[][Section 6.1]{Deutsch99}. The mean number of peaks in a Dyck path chosen at random then follows $\sum_{k=1}^n \frac{k}{n} {n \choose k} {n \choose k-1} / C_n$, which, by noting $k {n \choose k} = n {n-1 \choose k-1}$ and applying eq.~5.23 in Table 169 of \cite{GrahamEtAl94} to complete the summation, gives a mean of $(n+1)/2$. Thus, because we consider semi-length $n-1$, this approach reduces the mean number of nested summations from $n-1$ in eq.~\ref{eqRecursionCaterpillar2}  to $n/2$.

\begin{figure}[tpb]
\vskip -1cm
\begin{center}
\includegraphics[width=7in]{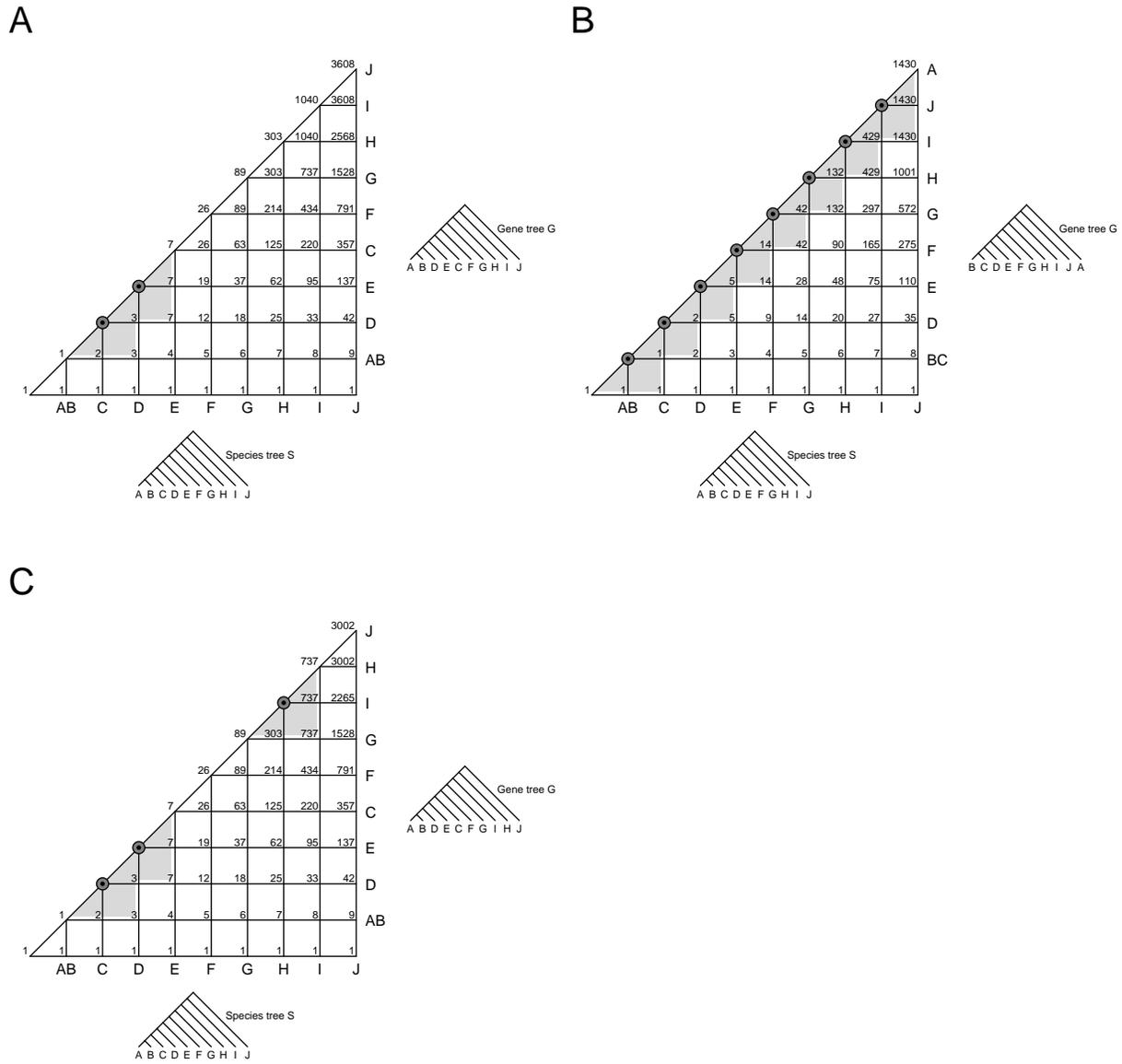}
\end{center}
\vskip -6.5cm
\caption{The number of coalescent histories for reverse incrementations. (A) $G$ differs from $S$ by a reverse incrementation. (B) $G$ differs from $S$ by a reverse incrementation that includes all labels. (C) $G$ differs from $S$ by a composition of multiple disjoint reverse incrementations.}
\label{figReverse}
\end{figure}

\begin{figure}[tpb]
\vskip -5cm
\begin{center}
\includegraphics[width=7in]{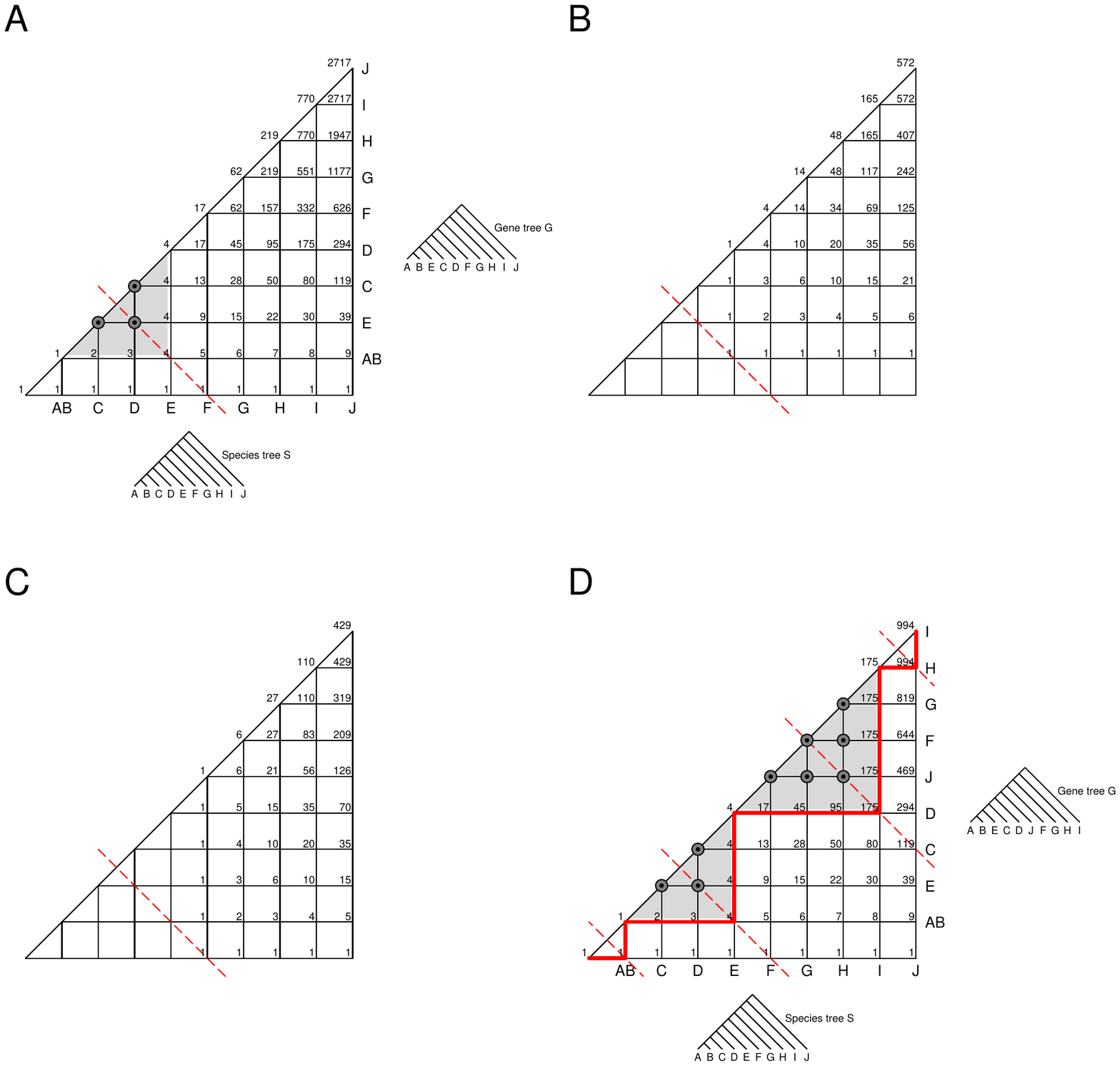}
\end{center}
\vskip -6.5cm
\caption{The number of coalescent histories for forward incrementations. (A) $G$ differs from $S$ by a forward incrementation. All paths must pass through the dashed red line. (B) The number of paths from $(4,1)$ on the dashed red line to $(9,9)$. (C) The number of paths from $(5,0)$ on the dashed red line to $(9,9)$. (D) $G$ differs from $S$ by a composition of two forward incrementations. All paths must pass through the four dashed red lines. The solid line represents the Dyck path associated with the roadblock set (see Figure \ref{figRoadblockSetSymmetric}).}
\label{figForward}
\end{figure}

\section{Discussion}
\label{secDiscussion}

\noindent We have studied coalescent histories for non-matching caterpillar gene trees and species trees, showing that as in the matching case, the number of coalescent histories for non-matching caterpillars can be computed using monotonic paths that do not cross the diagonal of a square lattice (Section \ref{secBijection}). The recursion for the number of coalescent histories that applies for arbitrary gene trees and species trees simplifies for non-matching caterpillars to a non-recursive formula dependent only on the caterpillar topologies (Section \ref{secNonrecursive}). Using these results, we have counted  coalescent histories for non-matching caterpillars differing by nearest-neighbor-interchange (Section \ref{secNNI}). By studying reverse and forward incrementation, we have also counted coalescent histories for caterpillars differing by subtree-prune-and-regraft (Sections \ref{secReverseCyclic} and \ref{secForwardCyclic}).

The bijection that connects coalescent histories and monotonic paths (Proposition \ref{propBijection}) makes use of \emph{roadblocks}, lattice points through which paths are not permitted to travel. Roadblocks occur such that if a point $(i,j)$ is a roadblock for $i \geqslant j$, then $(i,k)$ is also a roadblock for each $k$ with $j \leqslant k \leqslant i$, as is $(\ell,j)$ for each $\ell$ with $j \leqslant \ell \leqslant i$ (Remark \ref{remBGS}). Enumeration of roadblocked monotonic paths given a roadblock set connects to Catalan's triangle and trapezoids, enabling enumeration of the associated coalescent histories. Interestingly, the distinct roadblock sets can themselves be put into bijection with the monotonic paths that do not cross the diagonal of a square lattice, so that their number also follows the Catalan sequence (Section \ref{secRoadblockSets}).

Our construction linking coalescent histories and roadblocked monotonic paths enables a simple proof of a result of \cite{DegnanAndRhodes15} that for a fixed number of leaves, matching caterpillar trees have more coalescent histories than do non-matching caterpillar trees (Corollary \ref{coroStrictlyGreater}). In particular, the lattice construction that enumerates coalescent histories for a non-matching pair of caterpillar trees contains at least one roadblock, whereas the lattice for matching caterpillars has no roadblocks and therefore has more monotonic paths. For a fixed caterpillar species tree, we have identified exactly which non-matching caterpillar gene tree generates the most coalescent histories: it is immediate that this gene tree differs from the species tree by a single NNI move, as the caterpillars differing from the species tree by one NNI move are the only ones that produce only one roadblock. We find that the specific NNI move affecting leaves nearest the ``middle'' of the species tree generates the largest number of coalescent histories, and that as the number of leaves increases, this value is asymptotically equivalent to the Catalan number $C_{n-1}$ (Section \ref{secNNI}).

The case in which the gene tree differs from the species tree by reverse incrementation produces an elegant result. Recalling that the number of coalescent histories for matching caterpillars is described by a Catalan number, if the gene tree is obtained by a reverse incrementation affecting all leaf labels of the species tree, then the number of coalescent histories is given by the next-smaller Catalan number (Section \ref{secReverseCyclic}). The case of forward incrementation is more complex, but it can be analyzed using Catalan's trapezoids and suggests further connections to the analysis of Dyck paths (Section \ref{secForwardCyclic}).

This study provides some of the first systematic closed-form results concerning coalescent histories for non-matching gene trees and species trees. Our approach applies only to caterpillars, however, as the bijection with roadblocked monotonic paths relies on the fact that the internal nodes of a caterpillar tree can be placed in a sequence such that all pairs of internal nodes have an ancestor-descendant relationship. It does suggest, however, that connections to other combinatorial structures such as Dyck paths can assist in enumerating coalescent histories for non-matching gene trees and species trees beyond use of the recursion in eq.~\ref{eqRecursion}.

A question that remains open is that the set of integers that could equal the number of coalescent histories for some caterpillar gene tree and species tree remains unknown. Rosenberg \& Degnan (2010, Table 1) \nocite{RosenbergAndDegnan10} observed that for fixed species trees $S$ of size $n$ and certain values $t$, particularly small ones, large numbers of pairs $(G,S)$ had exactly $t$ coalescent histories, and Rosenberg (2019) \nocite{Rosenberg19} enumerated the pairs $(G,S)$ with exactly 1 coalescent history (the \emph{lonely pairs}). We and \cite{DegnanAndRhodes15} have shown that if $G$ and $S$ are caterpillars, then only values $t \leqslant C_{n-1}$ can represent the number of coalescent histories. Our NNI results show that all values in the open interval $(C_{n-1}-C_{\lfloor (n-1)/2 \rfloor} C_{\lceil (n-1)/2 \rceil}, C_{n-1})$ cannot be the number of coalescent histories for $(G,S)$. For fixed caterpillar $S$ with $n$ leaves, it is useful to obtain the size of the set of values of $t$ for which the pair $(G,S)$ has exactly $t$ coalescent histories. We observed that $(C_{n-1} + {n-1 \choose \lfloor (n-1)/2 \rfloor})/2$ provides an upper bound (Section \ref{secRoadblockSets}).

We note that the question of identifying the integers that represent the number of coalescent histories for some $(G,S)$ with $n$ leaves can be phrased entirely in terms of roadblocked monotonic paths without reference to coalescent histories. Describe a lattice as \emph{monotonically roadblocked} if for each roadblock $(i,j)$ with $i \geqslant j$, $(i,k)$ is also a roadblock for each $k$ with $j \leqslant k \leqslant i$, and $(\ell,j)$ is a roadblock for each $\ell$ with $j \leqslant \ell \leqslant i$. We seek the number of integers that represent the number of monotonic paths that do not cross the diagonal of some monotonically roadblocked lattice. That the bijection between coalescent histories and roadblocked monotonic paths raises such questions illustrates that constructions enabled by this bijection can be fruitful for studies of the properties of the paths themselves.

\vskip .3cm
\noindent
{
{\bf Acknowledgments.} We thank E.~Allman, J.~Degnan, F.~Disanto, and J.~Rhodes for helpful discussions. We acknowledge NIH grants R01 GM117590 and R01 GM131404 for support.
}
{
\bibliography{map3}}
\end{document}